\documentclass[11pt]{article}
\usepackage[margin=1in]{geometry}
\usepackage{graphicx}
\usepackage[backend=biber,style=numeric]{biblatex}
\usepackage{amsmath}
\usepackage{mathtools}
\usepackage{amssymb}
\usepackage{amsthm}
\usepackage{authblk}
\usepackage{xurl}

\theoremstyle{definition}
\newtheorem{example}{Example}

\newcount\Comments
\usepackage{color}
\usepackage{xcolor}
\usepackage{csquotes}
\usepackage{thmtools,thm-restate}
\usepackage{cleveref}

\definecolor{darkgreen}{rgb}{0,0.5,0}

\newcommand{\kate}[1]{}
\newcommand{\charlotte}[1]{}
\newcommand{\manish}[1]{}

\newcommand{\argmax}{\mathop{\arg\max}\limits}
\DeclarePairedDelimiter{\parens}{(}{)}

\newcommand{\eu}[1]{\mathbb{E}[U(#1)]}

\newtheorem{definition}{Definition}

\title{\textbf{When to Ask a Question: Understanding Communication Strategies in Generative AI Tools}}

\author{Charlotte Park}
\author{Kate Donahue}
\author{Manish Raghavan}

\affil{Massachusetts Institute of Technology\\
Cambridge, Massachusetts, USA}
\affil{\texttt{\{cispark, kpd46, mragh\}@mit.edu}}
\date{}

\begin{document}

\maketitle

\begin{abstract}
\label{sec-abstract}

Generative AI models differ from traditional machine learning tools in that they allow users to provide as much or as little information as they choose in their inputs. This flexibility often leads users to omit certain details, relying on the models to infer and fill in under-specified information based on distributional knowledge of user preferences. Such inferences may privilege majority viewpoints and disadvantage users with atypical preferences, raising concerns about fairness.
Unlike more traditional recommender systems, LLMs can explicitly solicit more information from users through natural language.
However, while directly eliciting user preferences could increase personalization and mitigate inequality, excessive querying places a burden on users who value efficiency.
We develop a stylized model of user-LLM interaction and develop an objective that captures tradeoff between user burden and preference representation.
Building on the observation that individual preferences are often correlated, we analyze how AI systems should balance inference and elicitation, characterizing the optimal amount of information to solicit before content generation.
Ultimately, we show that information elicitation can mitigate the systematic biases of preference inference, enabling the design of generative tools that better incorporate diverse user perspectives while maintaining efficiency. We complement this theoretical analysis with an empirical evaluation illustrating the model’s predictions and exploring their practical implications.

\end{abstract}

\section{Introduction}
\label{sec-introduction}
\newcommand{\mr}[1]{\textcolor{olive}{[MR: #1]}}
Generative AI models present a new frontier for personalization. Across a wide range of applications and modalities, users with heterogeneous preferences interact with generative models in the hopes of producing content they find desirable. Compared to more traditional recommender systems, generative AI models offer new opportunities: instead of having to learn from user behavior, conversational interfaces enable direct user feedback.
A user might, for example, instruct a tool to add more color to an image or make an email less formal.
On the other hand, their flexibility introduces new challenges as well. Because they support such a broad set of downstream use cases, generative AI models perpetually face ``cold-start’’ problems. Based on limited relevant information about a user’s preferences, a model must produce content that the user finds desirable.\footnote{In recognition of this lack of data, providers of AI chatbots have begun to save user data or connect to external sources to improve personalization.}

Consider an image-generation model tasked with the prompt: ``Make a photorealistic image of a robot relaxing under a tree.'' While the prompt encodes the broad strokes of what the user wants, it leaves far more questions unanswered (What kind of tree? What color is the robot? Is it reading a book?). In the absence of concrete information from the user, the model must fill in the blanks by (perhaps implicitly) making inferences about the user’s ideal image.

In a sense, these implicit inferences about preferences offer convenience to users.
A user may hold weak preferences about the type of tree in the image or the color of the robot, but it may not be worth their time to communicate each of their weak preferences to a model.
Viewed this way, generative models serve as (noisy and biased) decompression algorithms, taking in a compressed description of the user’s desired end product and inflating it by making guesses about all of the attributes that the user left unspecified~\cite{kreminski2025endless}.
Despite the fact that they may not produce a user’s ideal output, they provide utility by saving users time.
A user might be able to, say, write an email without any AI assistance that perfectly aligns with their preferences, but they would be willing to accept a slightly worse-aligned email if a model could generate one with minimal supervision.
In an idealized setting, a user would communicate precisely the preferences for which the marginal time required to do so is outweighed by the marginal utility gain from having those preferences satisfied.

Viewed this way, a rational user can simultaneously benefit from the use of generative models and still end up with suboptimal content.
The user does not get exactly what they want, but AI assistance allows them to produce something ``good enough'' with dramatically less effort than they would otherwise need.
While the user could continue iterating towards their ideal content, it is simply not worth their time to do so.

\paragraph*{Motivating examples.}
Throughout this paper, we will use several motivating examples to illustrate settings where such user-LLM interactions may occur. 
\begin{example}[Vacation planning]\label{ex:vacation}
A user is planning a trip to a city, and queries an LLM to ask about a suggested
itinerary. Based on its training data, the LLM ``knows'' that visitors to
this city tend to fall in a few rough patterns, either preferring to visit
locations $\{A, B, C, \ldots\}$ (e.g.,~museums and historical sites) or a
different set of locations $\{X, Y, Z, \ldots\}$ (e.g.,~restaurants and cafes).
Because a user doesn't know the city, she doesn't know her general
``type'' beforehand. An LLM can learn more about her relative preferences
by asking direct queries (``Would you prefer location $A$ or $X$?'') and
then inferring the rest of her preferences.
\end{example}
\begin{example}[Email writing]\label{ex:email}
 A user is relying on an LLM to help her write an email. While each email is unique, they do tend to fall into specific patterns, such as ``long, polite'' emails or ``short, direct'' emails. While users may not respond well to being asked ``Is this a polite email or not?'', they may be willing to respond to proxies, such as ``Should this email begin with a salutation or directly jump into subject matter?''.
\end{example}
\begin{example}[Image generation]\label{ex:image}
Consider a user who is relying on an LLM to help generate an image. Image generation is a very high-dimensional problem, but one with strong degrees of correlation (e.g. pixel-to-pixel correlation in output, but also common themes in image styles). While a user may not directly know or be able to articulate her preferences, she may be able to answer queries about style aspects or examples that she wishes to include. 
\end{example}
In all of these examples, user preferences may be idiosyncratic, or may be tied to sensitive features, such as socioeconomic status, race, or gender. We seek to model how preference underspecification impacts user utility, with a particular focus on implications for fairness.

\paragraph*{System-level effects of personalization under uncertainty.}

We consider an idealized AI system that, given partial information about the user's preferences, makes inferences based on its knowledge of the population distribution.
A model might infer, for example, that ``a flag next to the Statue of Liberty'' refers to the American flag, while ``a flag next to the Eiffel Tower'' refers to the French flag.

When left to ``fill in the blanks,'' models can have subtle but profound implications on the utility, fairness, and overall diversity of a system.
Inferred preferences may skew towards majority populations and tastes~\cite{chen2018my, zhu2021fairness, kleinberg2019simplicity, wu2022big, li2023transferable} 
As a result, outputs will become homogeneous precisely in those preference dimensions left unspecified.
Thus, the underspecification of preferences serves as a microfoundation for algorithmic homogenization \cite{kleinberg2021algorithmic, bommasani2022picking,castro2023human}.

\paragraph*{The impacts of user interaction.}
The challenges described---incomplete information, user fairness, and content homogeneity---are not unique to generative AI models. Similar concerns arise in more traditional recommender systems when dealing with limited user information. While this is true to some extent, a key differentiator of generative AI lies in its inherent interactivity. Unlike many traditional recommendation systems where user feedback is often implicit (e.g., clicks, purchases), content generation can involve multiple rounds of explicit interaction between the user and the model. A user can iteratively refine the generated output through natural language commands or other interactive modalities, providing direct feedback and steering the model towards their desired outcome.
Moreover, a model can \textit{actively solicit} the user's preferences by asking additional questions.
In fact, many state-of-the-art chatbots now exhibit this behavior (see \Cref{fig:chatgpt}).

\begin{figure}[h!]
    \centering
    \includegraphics[width=0.75\linewidth]{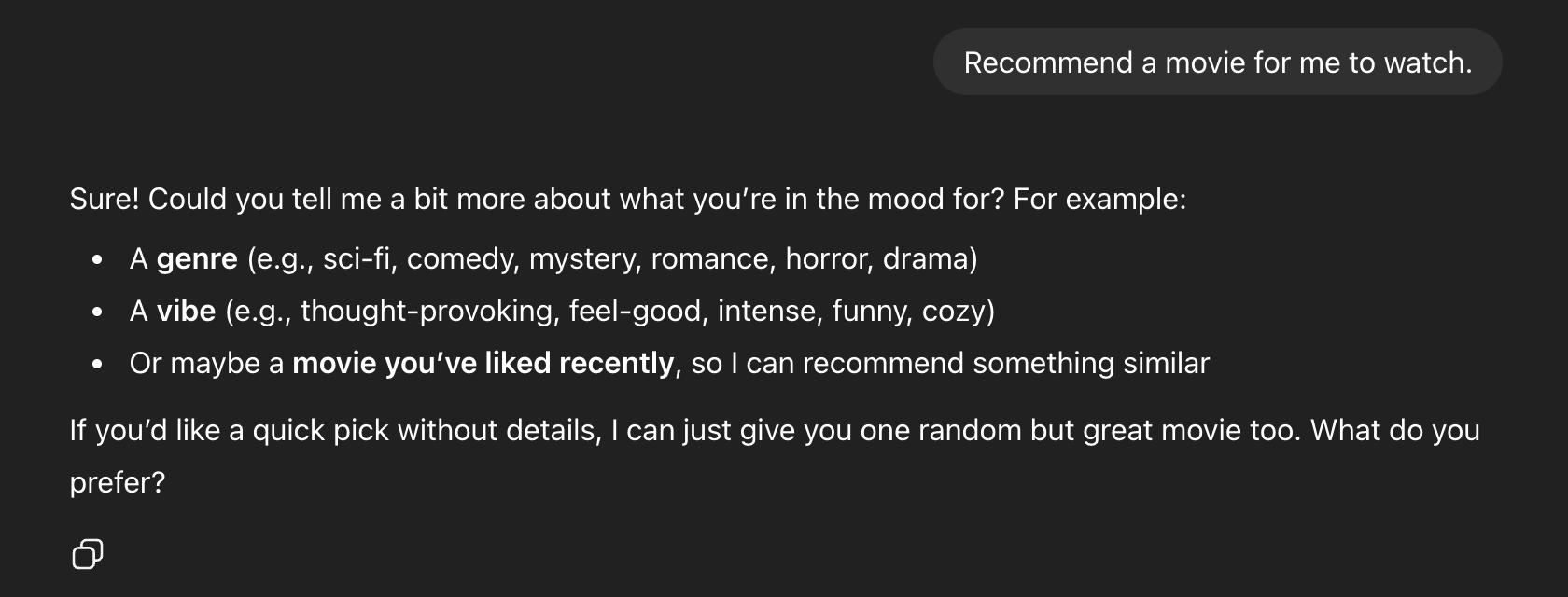}
    \caption{A screenshot of ChatGPT prompting the user for additional information when an under-specified prompt is provided by the user. A decision to solicit information trades off additional user effort (to respond to the LLM's question) for the benefits of better personalization.}
    \label{fig:chatgpt}
\end{figure}

\paragraph*{The present work: Modeling interactive content generation.}

We seek to understand how inferences about preferences impact the utility, fairness, and content diversity of AI systems.
In our model, an AI system has two levers at its disposal.
It can (1) solicit more information about a user's preferences and (2) generate content by making inferences about underspecified preferences.
At a high level, we might expect that soliciting information should increase both fairness and output diversity simply by more faithfully representing user preferences.
While this is true, additional information is not free; providing more information is costly for users (otherwise they would specify their complete preferences up-front).
Thus, we build a formal framework under which we can pose and answer questions about when information-solicitation is ``worth'' the cost.

Our paper is organized as follows.
After surveying related work in \Cref{sec-related-work}, we present our formal model in \Cref{sec-model}.
In \Cref{sec-querying} we study this problem through the lens of maximizing the user's expected welfare, in \Cref{sec-fairness} we turn to broader notions of user welfare (e.g. relating to fairness), and in \Cref{sec-diversity} we briefly discuss implications for output homogenization.
In \Cref{sec-empirics}, we validate the qualitative predictions of our model
with an empirical exercise.
We conclude in \Cref{sec-discussion} with a discussion of our results and future work. 

\section{Related Work}
\label{sec-related-work}
There has been a rich body of literature, both empirical and theoretical, on preference elicitation and social welfare in content creation.  

First, the emerging area of pluralistic alignment (e.g. \cite{sorensen2024roadmap}) aims to align LLMs with a range of human preferences over outcomes (see related summary papers \cite{xie2025survey, guan2025survey, kirk2023personalisation}). Most closely related to our work are papers that connect with voting and social choice literature to aggregate diverse user preferences (e.g. \cite{conitzer2024social, dai2024mapping, shirali2025direct, a2024policy, ge2024axioms, pardeshi2024learning, chen2024pal, freedman2026adaptive, srewa2026appa}). One key difference between most of this literature and our work is that it tends to focus on a one-shot setting where the goal of an algorithm is to represent multiple diverse viewpoints, whereas our focus is when the algorithm has the ability to interact with the human (through querying or critique of content) to better suit their preferences. 

Another related area of literature is that of algorithmic monoculture or homogenization (e.g. \cite{kleinberg2021algorithmic, bommasani2022picking}), which studies the impact of a ``monoculture'' of algorithmic tools on societal welfare. Our paper relates to this body of work by demonstrating how an attempt to satisfy multiple diverse users with only partial information about preferences can lead to homogenization of outcomes. A closely related area of research studies human-LLM collaboration in creative tasks \cite{haghighi2025ontologies, shen2025societal, zhang2025noveltybench, doshi2024generative, vodrahalli2023artwhisperer, xie2024measuring, moon2024homogenizing, conitzer2024social}. Creative tasks, such as generating images or stories, are especially relevant given that there they are high-dimensional outputs that often no ``right'' answer. For example, users probably have much wider variability over their most-preferred art image, as compared to a factual query (e.g. ``What is the capital of New York State?''). 

Some empirical work focuses on when it may be appropriate to ask questions to clarify user intent: this has been studied both in traditional web search tools (e.g. \cite{zamani2020analyzing, keyvan2022approach}) as well as more recently in the context of LLMs \cite{wu2023new, andukuri2024star, zhang2024modeling, qiu2024can, li2023eliciting, kuhn2022clam, ma2024you, dutta2024problem, li2025questbench, kobalczyk2025active, wu2024aligning, kendapadi2024interact, dou2026gate, montazeralghaem2025asking}, as well as summary papers \cite{yuan2025query, deng2025proactive}.

Finally, our work is also closely connected with both the areas of feature acquisition and of fairness in recommender systems. Feature acquisition (e.g. see survey \cite{rahbar2025survey}) studies ways to acquire features in order to improve predictions: it differs from our work in that it does not consider generation and critique as an alternative framework, even for papers focusing on feature acquisition in generative AI (e.g. \cite{pmlr-v139-li21p}). For fairness in recommender systems, literature in this space studies tradeoffs related to fairness and utility in suggesting content to users (e.g. \cite{korevaar2023matched, li2023fairness, singh2018fairness}), or ensuring sufficient diversity among suggested items (e.g. \cite{kunaver2017diversity, peng2024reconciling, zhou2010solving, javari2015probabilistic}). Our paper differs from this body of work primarily in the mode of interaction: rather than suggesting content to users in a ranked list, in our setting users are allowed to interact directly with the algorithm (here, a LLM) and answer queries or critique generated content. Another difference is that recommender systems are typically selecting items from a pre-existing set of content to suggest to users, while our content creation setting implicitly assumes that the LLM can return every possible type of output. The main goal of this paper is to study how such modes of interaction affect the fairness and utility guarantees that we can give users. 

\section{Model}
\label{sec-model}
\subsection{Overview}

We consider the interaction between two different agents -- a user and an LLM.
Given an initial prompt from the user, the user has preferences over the LLM's
response space. In general, this preference distribution can be high-dimensional
and domain-specific. The preference space for emails is quite different from the
preference space for movie recommendations. For simplicity, we will take the
user's initial query as given.
 Moreover, we assume that the LLM knows the distribution of user preferences over the response space, while the user merely knows their own preferences. 

As an example of this communication asymmetry, consider the example of a user using an LLM to help plan their vacation given in Example \ref{ex:vacation}. The user has a sense of their own travel preferences (i.e. what locations they might want to visit), but not exact information about what's available. The LLM, on the other hand, has access to the full space of possible vacations but does not know the user’s specific desires. This mismatch creates a challenge for effective co-creation, as each party holds complementary but incomplete information.

In this model, the goal of the LLM is to generate outputs that are maximally aligned with a user's preferences. A key ingredient of our model is interactivity: We allow for the LLM to solicit additional information about the user by asking additional questions. While this is costly---users do not like to answer questions---it allows the LLM to make better inferences about the user's preferences.
The LLM must therefore trade off the cost of information acquisition with the benefits it brings to personalization.

\subsection{Preference Generation}

Here, we introduce a stylized model of preferences to ground our study.
Because the space of possible human preferences is vast, we focus on a simple, illustrative instantiation that captures the key properties of interest. For a given query, we represent each user's preferences as a vector of $n$ binary features, which we will refer to as their \emph{preference vector}.
A key property that our model captures is correlation between preferences: Information about the user's preferences along one dimension enables better inferences about their preferences along another. In Example \ref{ex:vacation}, for instance, users who prefer visiting location $A$ may also enjoy visiting location $B$, while users who like location $X$ may be more interested in also visiting location $Y$. In what follows, we will provide a minimal model that captures these relationships.

We assume that preference vectors are generated according to a particular Bernoulli mixture model e.g.,~\cite{bishop2006pattern}:

\begin{enumerate}
    \item 
    Users are randomly assigned to either cluster $0$ with probability $\alpha$ or cluster $1$ with probability $1 - \alpha$. Users assigned to cluster $0$ begin with all $0$ preference vectors, and users assigned to cluster $1$ begin with all $1$ preference vectors.
    \item 
    Each feature preference is flipped independently with probability $p < 0.5$.
\end{enumerate}
From this process, we can fully describe the probability that any preference vector occurs in the general population.

When thinking about the practicality of such a preference model, consider the
vacation planning example. The different sets of locations given in Example
\ref{ex:vacation} might correspond to high-cost vs. low-cost activities.
Travelers might generally plan their trips on a tight budget or seek luxury
experiences. However, those on a tighter budget might be open to splurging on one or two activities of interest, and those willing to spend more overall may still be looking to save in a few areas
Such behavior naturally fits into our model of preference generation, where cluster $0$ and cluster $1$ correspond to $n$ high-cost and low-cost activities, and any individual deviates their standard budgetary habits with some probability. Although incorporating correlated deviations across preferences could provide a richer model of user variation, assuming independent randomness adequately captures heterogeneity within preference clusters for the purposes of our analysis. 

More generally, this model is intended as a first step toward understanding the relationship between preference inference, information acquisition, and fairness; a full treatment of the richer preference structures that arise in specific application domains is left to future work.

\subsection{User-LLM Interaction}
Now that we have described the setup of feature preferences, we must define how the user and LLM exchange information. We consider an interaction model in which the user approaches the LLM with an initial task, but where the LLM does not know the human's full preferences over outcomes for that specific task (e.g. the full binary string). The $n$ features considered in the previous section are assumed not to be included in the initial task, and are unknown conditional on the task being defined.

To gain information about a user's preference, the LLM will directly query the user for their preference values of $k$ features, where $0 \leq k < n$. 

Based on the information learned through querying, the LLM then generates some output to give to the user. All $k$ of the feature values revealed by the user are assumed to match the LLM's output, leaving $n - k$ for the LLM to ``guess'' in its output. 

We assume that the only way an LLM can obtain information about the user's preferences is through querying individual features rather than directly querying a user's cluster identity. There are several possible reasons this assumption can make sense, including when clusters do not have well-defined names or due to user privacy concerns (i.e. a user may not want to reveal their exact financial status, as in Example \ref{ex:vacation}). 

The user's utility depends on (1) how many questions they had to answer, and (2) how close the generated output is to their true preferences.
We assume that the LLM has full knowledge of the preference distribution (i.e., parameters $p$ and $\alpha$).
Based on this assumption, we assume that the LLM's decision problem is comprised of both of the following:

\begin{definition}[Optimal response policy] Given a specific set of $k$ revealed features, the optimal response policy specifies how should the remaining features be filled in.
\end{definition}

\begin{definition}[Optimal querying policy] For a given $n$, the optimal querying policy determines the optimal number of features $k$ to query (assuming the optimal response policy fills in the remaining features).  
\end{definition}

\subsection{Utility}

Having guessed the user's preferences along $n-k$ dimensions, the LLM produces an output. We'll use $B$ to denote the disagreement between the output and the user's true preferences---that is, $B$ is the number of dimensions along which the LLM ``guesses wrong.''
Based on this quantity, we define a user's utility as

\begin{equation} 
    U(k, B) = n - k - B.
\end{equation}
In this notion of utility, the cost of a single query is taken to be the same as the benefit of identifying an additional preference correctly. Practically, the cost/benefit ratio does not have to be $1$, and we will explore other ratios empirically in Section \ref{sec-empirics}.

From this definition, the expected utility of a user is therefore a function
only on the expected number of mismatched features in the output compared to
their feature vector. Abusing notation slightly,
\begin{equation} \label{eq:util}
    \eu{k} = n - k - \mathbb{E}[B].
\end{equation}
The expected value of $B$ above is taken over both the possible sets of $k$
revealed bits and all possible user preference vectors. As mentioned in the
previous section, one of the LLM’s decision problem when maximizing utility is
to determine which $n - k$ bits to generate given the revealed bits and
preference generation parameters. As mentioned previously, the policy that
achieves the highest expected utility will be referred to as the optimal
response.

We include a complete formalization in \Cref{app-derivations}.

\subsection{Model Summary and Overview of Results}

To summarize, our model consists of two clusters of users with noisy and opposing preferences.
The LLM can choose to solicit information from the user, at some cost to user satisfaction.
Given the user's response, the LLM must infer the remainder of the user's preferences.

With this model, we begin by characterizing the LLM's welfare-maximizing policy in \Cref{sec-querying}.
How much information should it solicit from the user, and with that information, how should it make inferences about the user's unknown preferences?
The welfare-maximizing policy can lead to disparities in utility across users.
In \Cref{sec-fairness}, we show that the LLM's optimal response policy depends on the planner's aversion to inequality: By asking more queries, the LLM can personalize better and reduce inequality.
We consider the implications for content diversity in \Cref{sec-diversity}, showing that inequality-aversion can also lead to more diverse LLM outputs, reducing output homogeneity.
Before we conclude, we conduct an empirical validation of our model in
\Cref{sec-empirics}.

\section{When Asking a Question is Helpful}
\label{sec-querying}
We begin by asking two natural questions: what is the optimal response policy;
and how do we choose $k$, the optimal querying policy? In the first part of this
section, we will investigate how the optimal response policy is determined from
the set of revealed observations. Then, we move onto answering the second
question about determining the optimal number of queries to ask. We investigate
how initial cluster sizes, determined by parameter $\alpha$ in the preference
generation function, affect the optimal $k$ value. We observe that depending on
the probability $p$ with which each bit flips from its initial assignment,
the optimal strategy can vary: in some settings, it is best to ask no questions
at all, while in others, asking one or more queries yields higher utility.
Intuitively, more queries are helpful if they allow the LLM to make better
inferences about the remainder of the user's preferences.

\subsection{Optimal Response Policy}
First, we note that the optimal response policy is actually quite straightforward.

\begin{restatable}{lemma}{utilpolicy} 
\label{thm:util-policy}
    The optimal response policy first identifies the user’s cluster of origin, and then makes the most accurate prediction for the remaining features based on this cluster.
\end{restatable}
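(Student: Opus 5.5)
Fix $k$ and a set $S$ of $k$ queried features with observed values $x_S$. Since $U(k,B)=n-k-B$ and $k$ is fixed, maximizing expected utility conditional on $x_S$ is the same as minimizing $\mathbb{E}[B\mid x_S]$. By linearity, $\mathbb{E}[B \mid x_S]=\sum_{j\notin S}\Pr[\hat y_j \neq y_j \mid x_S]$. The terms decouple across unrevealed coordinates, so the optimal response sets each $\hat y_j$ to the more likely value of $y_j$ given $x_S$, that is, the coordinatewise posterior mode. The plan is to show that this coordinatewise rule equals ``infer the more likely cluster, then output that cluster's bit.''

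\textbf{Computing the posterior for one coordinate.} Let $C\in\{0,1\}$ be the cluster and let $m$ be the number of ones in $x_S$. Bayes' rule gives
\[
\Pr[C=0\mid x_S]\propto \alpha\, p^{m}(1-p)^{k-m},\qquad \Pr[C=1\mid x_S]\propto (1-\alpha)\, p^{k-m}(1-p)^{m}.
\]
Write $q=\Pr[C=1\mid x_S]$. Given $C$, the unrevealed bit $y_j$ is independent of $x_S$. Therefore
\[
\Pr[y_j=1\mid x_S]=q(1-p)+(1-q)p=p+q(1-2p).
\]
Because $p<1/2$, this expression is strictly increasing in $q$, and it exceeds $1/2$ exactly when $q>1/2$. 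So the coordinatewise mode for every $j\notin S$ is $1$ when cluster $1$ is the MAP cluster and $0$ when cluster $0$ is. This is exactly the policy in the lemma: identify the most likely cluster of origin (the MAP estimate from the $k$ revealed bits and the prior $\alpha$), then fill each remaining feature with that cluster's base value, which is the most accurate per-feature prediction given the cluster.

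\textbf{Ties and the resulting error.} If $q=1/2$, both choices give error $1/2$ on every coordinate, so any choice is optimal, and picking either cluster is consistent with the lemma. The expected number of errors is then $(n-k)\bigl(p+\min(q,1-q)(1-2p)\bigr)$. The MAP cluster can be characterized by comparing $\frac{\alpha}{1-\alpha}\bigl(\frac{1-p}{p}\bigr)^{k-2m}$ with $1$, which is a weighted majority vote on the revealed bits. I would record this characterization for use in the later sections.

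\textbf{Main obstacle.} The only subtle step is justifying that the guess should be the same for all unrevealed bits. That point is settled by the monotonicity of $p+q(1-2p)$ in $q$ together with $p<1/2$, so I expect no real difficulty beyond stating the conditional-independence structure of the mixture carefully.
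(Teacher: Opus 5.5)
Your proposal is correct and follows the same route as the paper's proof: reduce to minimizing $\mathbb{E}[B]$, decouple across the unrevealed coordinates by linearity, and take the coordinatewise posterior mode. Your version is more complete than the paper's. The paper justifies a common guess for every bit by calling the bits ``identically distributed and independent'' (they are only exchangeable, and independent only conditionally on the cluster), and it never explicitly links the posterior mode to the cluster, whereas your computation $\Pr[y_j=1\mid x_S]=p+q(1-2p)$ together with $p<1/2$ supplies exactly that link.
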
 

Intuitively, because we consider a linear utility function, the
utility-maximizing choice is to guess the most likely element for each of the
user's preferences.
In the special case where $\alpha = 1/2$ and users are equally likely to start out with an all-$0$s or all-$1$s preference vector, the optimal response policy is simply to guess the bit value that is more frequently revealed during querying.

\subsection{Optimal Querying Policy}

Now, let us move on to considering the optimal $k$, the number of questions to ask.
This is determined by a trade-off between the value of acquiring information and its cost to user experience.

On the one hand, increasing $k$ allows the LLM to learn more about a user's
preferences, enabling better personalization.
On the other hand, soliciting information from the user requires effort on their part, which we model as costly.

\subsubsection{Equal-Sized Clusters}

To analyze $k$, we will begin by considering a simplified case of $\alpha$ (the
probability of being one cluster) $= 1/2$. In this setup, users are equally
likely to come from cluster 0 or 1.

\begin{restatable}{lemma}{queries}
\label{lem:queries}
    When $\alpha = 1/2$, for all $n > 2$, there exists a noise rate $p$ when it is optimal to ask a single query and a noise rate $p$ where it is optimal to ask no queries.
\end{restatable}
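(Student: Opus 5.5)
We compute $\eu{0}$ and $\eu{1}$ explicitly for $\alpha = 1/2$ and compare them as functions of $p$. By \Cref{thm:util-policy}, the optimal response first infers the user's cluster and then guesses each unrevealed bit according to that cluster. The guess is correct with probability equal to the posterior probability that the user belongs to the guessed cluster, times $1-p$, plus the complementary posterior times $p$.

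\emph{No queries ($k=0$).} By symmetry, every unrevealed bit equals $1$ with probability $1/2$, so any response errs on each bit with probability $1/2$. Hence $\mathbb{E}[B] = n/2$ and $\eu{0} = n/2$.

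\emph{One query ($k=1$).} The revealed bit $b$ gives posterior probability $1-p$ on the cluster matching $b$. The optimal response therefore copies $b$ onto the remaining $n-1$ bits. A remaining bit agrees with $b$ with probability $(1-p)^2 + p^2$, because it matches exactly when both bits were flipped or neither was. So
\[
\eu{1} = n - 1 - (n-1)\bigl(1-(1-p)^2-p^2\bigr) = (n-1)\bigl((1-p)^2+p^2\bigr) - 1 .
\]
Writing $q = (1-p)^2 + p^2 = 1 - 2p(1-p)$, which decreases from $1$ at $p=0$ to $1/2$ at $p=1/2$, we get
\[
\eu{1} - \eu{0} = (n-1)q - 1 - n/2 .
\]

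\emph{Evaluating at the endpoints.} As $p \to 0$ the difference tends to $n/2 - 2$. As $p \to 1/2$ it tends to $(n-1)/2 - 1 - n/2 = -3/2 < 0$. Continuity then gives a noise rate near $1/2$ at which zero queries beat one query, for every $n$.

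\emph{One query versus more queries.} To say a single query is \emph{optimal} we must also beat every $k \ge 2$. At $p$ near $0$ the posterior is essentially certain after one bit, and each further query costs $1$ while gaining at most $O(p)$ in accuracy. More precisely, $\eu{k} \le n-k$ for every $k$, and at $p=0$ we have $\eu{1} = n-2$, which is at least $n-k$ for all $k \ge 2$ (with equality at $k=2$). Since each extra query strictly costs $1$ while the error terms are $O(p)$, I will check for $p$ small enough, using the explicit formula for $\eu{k}$ from \Cref{app-derivations} if needed, that $\eu{1} > \eu{k}$ for all $k \ge 2$.

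\emph{Main obstacle: small $n$.} At $p = 0$ the comparison of $k=1$ with $k=0$ gives $n/2 - 2$, which is nonnegative only when $n \ge 4$. For $n = 3$, or $n=4$ where the difference at $p=0$ is zero, the claim that a single query is strictly optimal must be verified directly. For $n=3$ at $p=0$ we get $\eu{1} = 1$ and $\eu{0} = 1.5$, which appears to contradict the statement. So the delicate step is reconciling the small-$n$ cases with the paper's exact conventions. For instance, the paper may treat the first query as free or count the revealed bit as matched. I would first recheck the utility accounting for the revealed bits against \Cref{app-derivations}. If the conventions are as I read them, I would restrict the claim to $n \ge 5$, where $n/2 - 2 > 0$ gives a strict inequality.
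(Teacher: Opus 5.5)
Your approach of comparing $\mathbb{E}[U(k)]$ across $k$ in closed form is sound, but an algebra slip derails the conclusion. Your first expression for one query is correct. It simplifies to $n-1-(n-1)\bigl(1-(1-p)^2-p^2\bigr)=(n-1)q$ (with your $q=(1-p)^2+p^2$), not $(n-1)q-1$; you charged the query cost twice.

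As a check, take $n=3$, $p=0$. The revealed bit identifies the cluster, both remaining bits are guessed correctly, and $U=3-1-0=2$, not $1$. So $\mathbb{E}[U(1)]-\mathbb{E}[U(0)]=(n-1)q-n/2$. This equals $n/2-1$ at $p=0$ and $-1/2$ at $p=1/2$. The value at $p=0$ is positive exactly when $n>2$, which is precisely the lemma's hypothesis. The apparent counterexample at $n=3$ and the proposed restriction to $n\ge 5$ are artifacts of the slip, and as written your argument does not establish the stated lemma.

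The same slip made the comparison with $k\ge 2$ look borderline. At $p=0$ one has $\mathbb{E}[U(k)]=n-k$ for every $k\ge 1$. So $\mathbb{E}[U(1)]=n-1$ strictly exceeds $\mathbb{E}[U(k)]$ for all $k\ge 2$, and also exceeds $\mathbb{E}[U(0)]=n/2$ when $n>2$. Thus $p=0$ itself is a noise rate at which one query is uniquely optimal, and no $O(p)$ perturbation argument is needed.

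There is also a smaller gap on the other side. Near $p=1/2$ you only show that $k=0$ beats $k=1$, whereas optimality requires beating every $k\ge 1$. This is easy to close. At $p=1/2$ every bit is an independent fair coin regardless of cluster, so $\mathbb{E}[U(k)]=(n-k)/2$, uniquely maximized at $k=0$ with margin at least $1/2$. Each $\mathbb{E}[U(k)]$ under the optimal response is continuous in $p$, since it is a finite sum of maxima of polynomials in $p$, and $k$ ranges over a finite set. Hence $k=0$ stays strictly optimal for all $p$ slightly below $1/2$, which also respects the model's constraint $p<1/2$.

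With both fixes your argument is a complete analytic proof for every $n>2$. That is genuinely stronger than the paper's proof, which reads the two regimes off Figure~\ref{fig:equal-clusters} for $n=3$ at $p=0$ and $p=0.5$. The paper therefore checks the claim numerically for a single $n$ rather than proving it for all $n>2$.
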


\begin{figure}[h!]
    \centering
    \includegraphics[width=.9\linewidth]{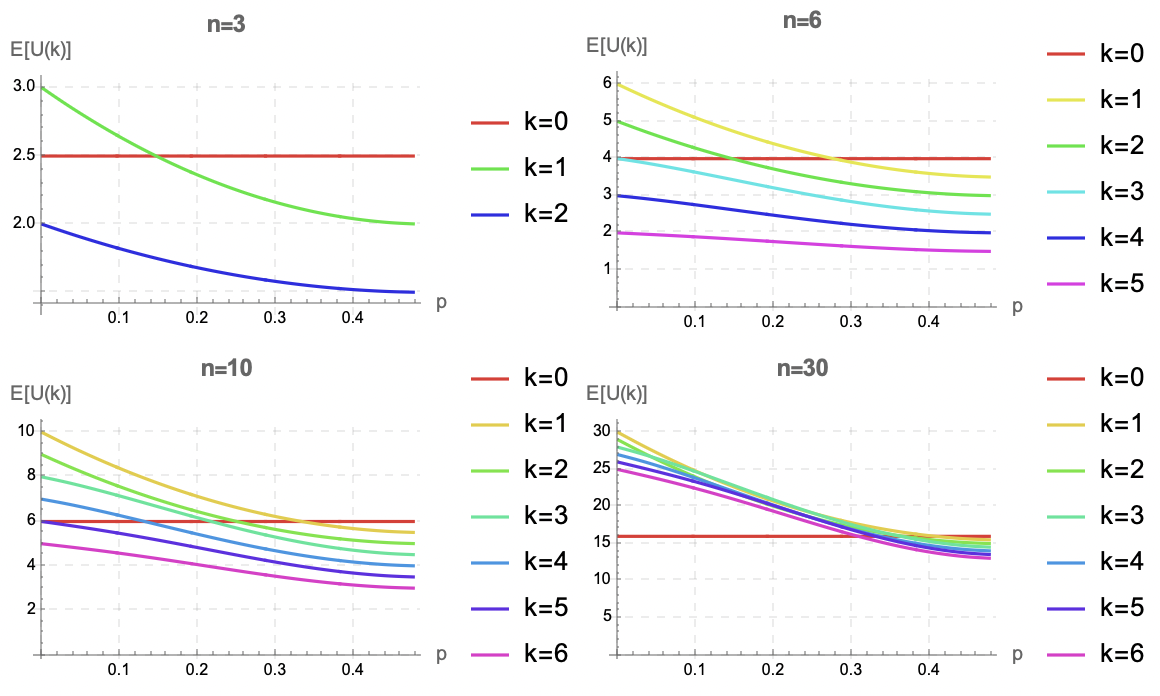}
    \caption{Expected utility $\eu{k}$ vs $p$ in the case of $\alpha = 1/2$ for $n = 3, 6, 10$, and $30$. For each $n$, as the noise rate $p$ increases, there comes a point where it optimizes expected utility to not ask any questions. In other words, as there is more uncertainty to the user's cluster of origin based on the information, there is less value to asking questions. Additional figures are provided in Appendix \ref{app-plots}.}
    \label{fig:equal-clusters}
\end{figure}

When the noise rate $p$ is low, a single query often suffices to identify the user’s preference cluster with high probability. Once the cluster is known, the remaining bits are likely aligned with it, making subsequent predictions more accurate and the expected utility higher.

Conversely, as $p$ approaches $1/2$, each bit in a user’s preference vector becomes nearly random. Recovering the user's original cluster is thus both \emph{harder} and \emph{less informative}. Any query yields mostly noise, so the LLM’s guesses are no better than random while still incurring a utility cost for $k > 0$. Hence, when $p$ is near $1/2$, it is often better to set $k = 0$.

Together, these cases illustrate that while soliciting information can substantially improve inference when the signal is reliable, it becomes counterproductive when the signal is dominated by noise. In general, querying is valuable only when the information obtained is sufficiently strong to offset its cost.

\subsubsection{Unequal Clusters}

Next, we consider the more general case of clusters occurring with probability $\alpha$ and $1 - \alpha$. In light of Example \ref{ex:vacation}, assume that there's a minority group that is more cost-sensitive. Without additional information, the LLM's ``best guess'' would be to satisfy the majority group's utility, which would lead to a more expensive itinerary. Soliciting more information in terms of preferences could make it more likely to satisfy the minority group's preferences. However, depending on the noise rate, the additional cost in terms of querying may not be ``worth it'' when the dominant group is large. 

\begin{restatable}{lemma}{extremealpha}
\label{lem:extreme-alpha}
    There exist values of $n$ where it maximizes expected utility to ask no questions for all values of $p$.
\end{restatable}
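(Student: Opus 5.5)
The plan is to exhibit a small $n$ explicitly and compare the expected utilities of the only admissible query counts directly, for every $\alpha$ and every $p<1/2$. Since $0\le k<n$, the case $n=1$ is degenerate: only $k=0$ is allowed. I would mention it but not rely on it. The informative witness is $n=2$, where the LLM chooses between $k=0$ and $k=1$. I would show $\eu{0}\ge \eu{1}$ uniformly in $p$ (and in $\alpha$), which gives the statement.

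\textbf{Step 1: compute $\eu{0}$.} With no revealed bits, the optimal response policy of \Cref{thm:util-policy} guesses, for each feature, its more likely marginal value. Each bit equals $0$ with probability $q=\alpha(1-p)+(1-\alpha)p$. The expected number of errors per bit is therefore $\min(q,1-q)\le 1/2$. Hence
\[
\eu{0}=n\bigl(1-\min(q,1-q)\bigr)\ge n/2 .
\]
For $n=2$ this gives $\eu{0}\ge 1$.

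\textbf{Step 2: bound $\eu{1}$ crudely.} With one query, the revealed bit is matched and the single remaining bit is guessed by the optimal policy. Its expected error $\mathbb{E}[B]$ is nonnegative, so $\eu{1}=n-1-\mathbb{E}[B]\le n-1=1$. Combining the two steps, $\eu{1}\le 1\le \eu{0}$ for all $p$ and $\alpha$. So asking no question is (weakly) optimal for $n=2$.

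The only delicate point is a tie. Equality requires $\min(q,1-q)=1/2$ and zero error on the inferred bit, and these cannot both hold when $p>0$. If strict optimality is wanted, I would note that $q=1/2$ forces $\alpha=1/2$. In that case the conditional error on the second bit is strictly positive whenever $p>0$, giving strict inequality. The remaining edge case is $p=0$ with $\alpha=1/2$, where $\eu{0}=\eu{1}=1$ and "maximizes" must be read weakly.

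I expect no step to be a serious obstacle. The main care needed is to phrase the result so that it is clearly not vacuous, which is why I use $n=2$ rather than $n=1$.

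Optionally, I would remark on what the argument shows in general. Revealing a bit saves at most $\min(q,1-q)\le 1/2$ expected error on that bit but costs $1$. So querying only pays off through improved inference on the other $n-k$ bits, which requires $n$ large enough.
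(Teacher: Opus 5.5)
Your argument is correct, and it takes a genuinely different route from the paper's. The paper's proof is purely numerical. It reads off Figure~\ref{fig:extreme-alpha}, drawn at $\alpha=0.9$, that $k=0$ is optimal across the plotted range of $p$ for $n=3$ and $n=6$. You instead certify the claim analytically at $n=2$. You compare the exact value $\mathbb{E}[U(0)]=n(1-\min(q,1-q))\ge n/2$ with the trivial bound $\mathbb{E}[U(1)]\le n-1$, uniformly in $p$ and $\alpha$, and you correctly isolate the only tie ($\alpha=1/2$, $p=0$).

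Your route buys rigor, since a plot only checks finitely many $p$, and it buys generality in $\alpha$. What it gives up is the phenomenon the lemma is meant to illustrate in the unequal-clusters section. Your inequality already holds at $\alpha=1/2$, so the witness $n=2$ reflects only that with two bits the expected error without any query is at most $1$, the price of a single query. It does not show that a lopsided prior suppresses querying. This is consistent with Lemma~\ref{lem:queries} being stated only for $n>2$, and with the paper choosing $n=3,6$, where balanced clusters would make a query worthwhile for small $p$.

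Your method reaches those instances with one refinement. Every unrevealed bit is wrong with probability at least $p$ even when the cluster is known, so $\mathbb{E}[U(k)]\le (n-k)(1-p)$. For $\alpha\ge 1/2$ one has $\mathbb{E}[U(0)]=n(\alpha-(2\alpha-1)p)$. Hence $\mathbb{E}[U(0)]-\mathbb{E}[U(k)]\ge 1-n(1-\alpha)+p\,(2n(1-\alpha)-1)$ for all $k\ge 1$. This bound is linear in $p$ and equals $1/2$ at $p=1/2$, so it is nonnegative on $[0,1/2)$ whenever $n\le 1/(1-\alpha)$. That recovers the paper's $n=3,6$ at $\alpha=0.9$ rigorously, with strict inequality. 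Your $n=2$ result is the worst case $\alpha=1/2$.
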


\begin{figure}[h!]
    \centering
    \includegraphics[width=.9\linewidth]{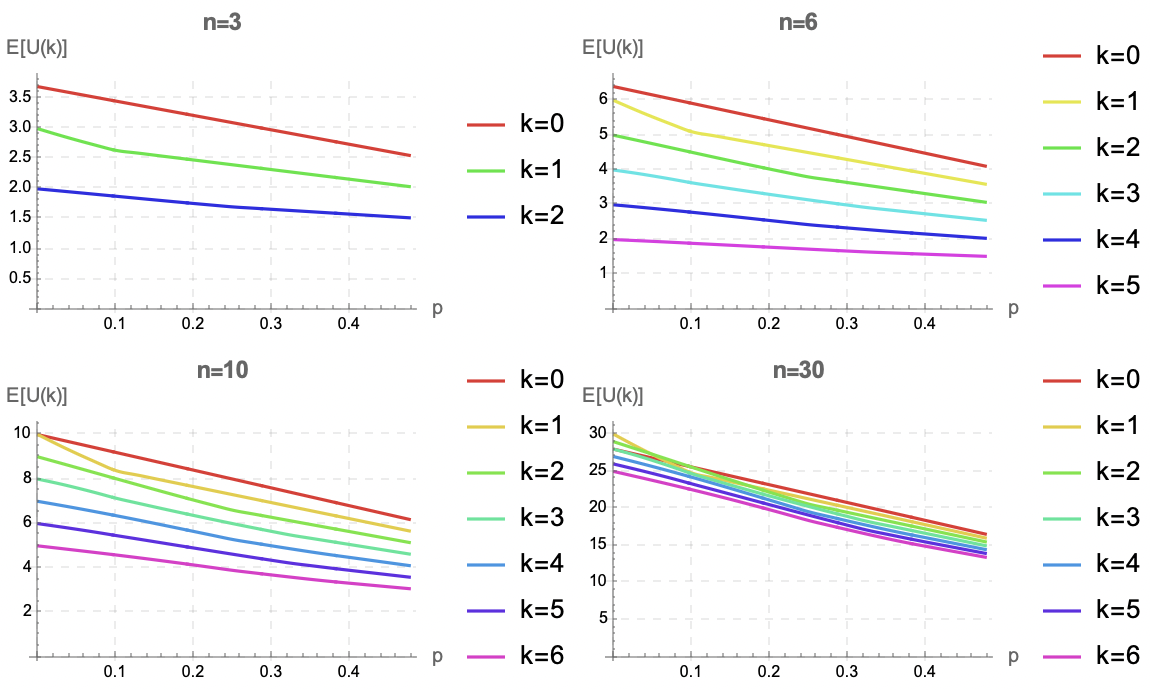}
    \caption{Expected utility $\eu{k}$ vs $p$ in the case of $\alpha = 0.9$ for $n = 3, 6, 10$, and $30$. When $n$ is smaller, $k = 0$ is the optimal querying policy regardless of $p$. When $n$ is larger however, it can sometimes be better to ask a single question.}
    \label{fig:extreme-alpha}
\end{figure}

When $\alpha$ is either close to $0$ or $1$, the niche set of users is significantly smaller than those with majority preferences, leading the model to have a much stronger prior that any user is part of the majority. When this prior belief is sufficiently strong, the utility cost paid from asking a single query might outweigh any information gained from the query. This can even be true for small noise rate $p$ if the group of users with niche tastes is sufficiently small.

Figure \ref{fig:regime} provides an overview of the landscape of optimal $k$ values for a range of choices for $\alpha \in [0.5, 1]$ and $p \in [0, 0.5]$ for $n = 30$. We see here that for increased values of $n$, it can be useful to ask more than a single question, particularly for intermediate values of $p$ and $\alpha$. In these cases, the prior belief about cluster membership is strong but not decisive, so the additional clarification provided by a second query meaningfully increases confidence in the remaining preference inferences.

\begin{figure}[h!]
    \centering
    \includegraphics[width=.5\linewidth]{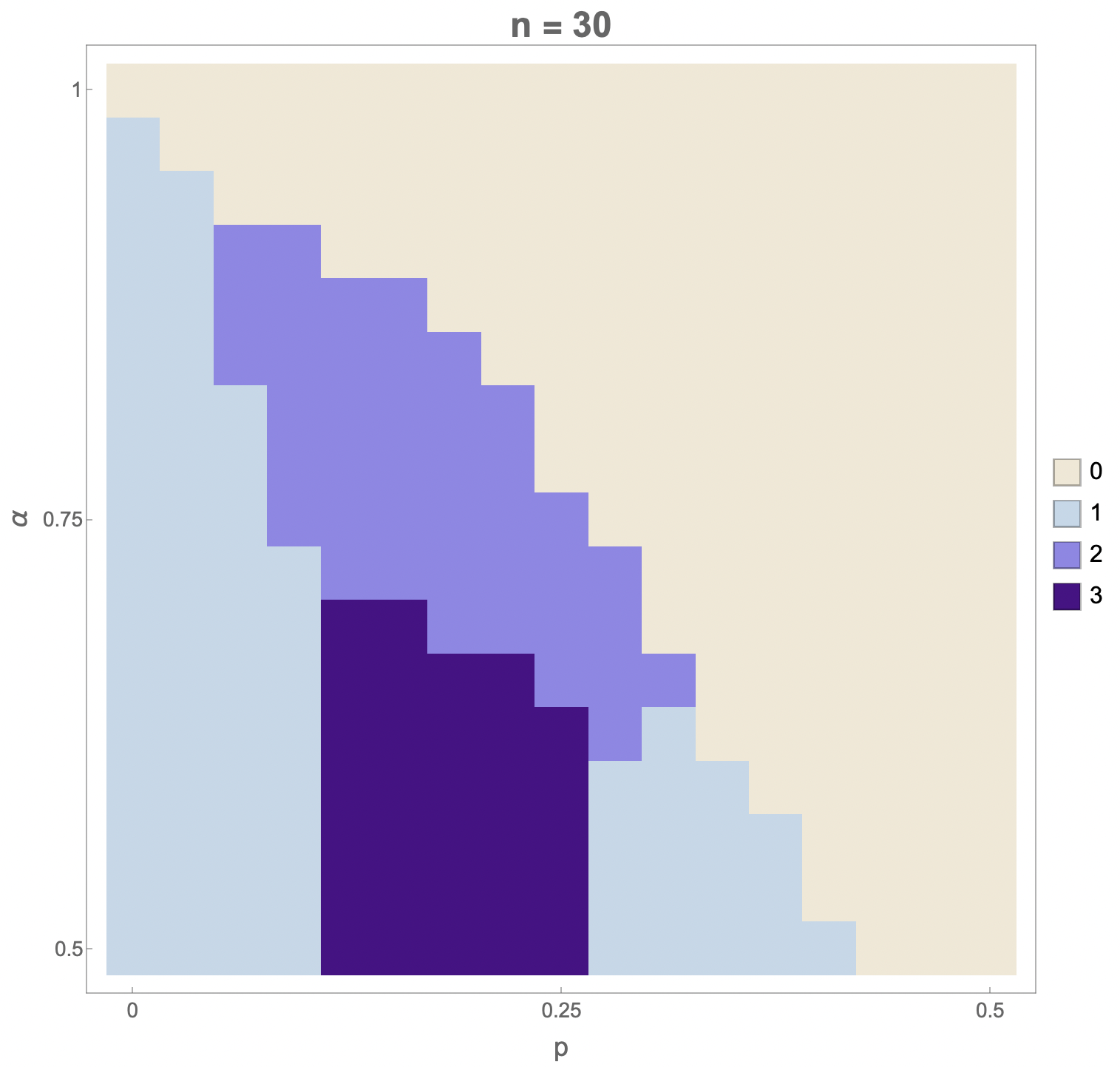}
    \caption{Optimal number of queries $k$ for various values of noise rate $p$ and probability of cluster $0$ membership $\alpha$ with $n = 30$. The plot is symmetric for smaller values of $\alpha$ from $0$ to $0.5$.}
    \label{fig:regime}
\end{figure} 

However, these results also make clear that under expected utility maximization, information acquisition is driven entirely by aggregate gains. When one preference cluster is sufficiently small, the model optimally forgoes clarification because the expected benefit is too low. As a result, minority users are systematically under-served: their preferences are unlikely to be inferred correctly, even when the noise level is small. This raises concerns of fairness, as minority users have little to no chance of receiving outputs that align with their preferences. In the following section, we explore alternative objective functions that explicitly account for this aversion to inequality.

\section{Fairness and Welfare Considerations}
\label{sec-fairness}
The prior section studied how to optimize for expected utility over a population. In this section, we turn to questions of fairness across a population. We study how both the optimal response policy and optimal querying policy change if we are optimizing for different notions of welfare.

\subsection{Welfare}

We model welfare using a single-parameter family of social welfare functions for any output and where the parameter $\gamma$ represents the level of aversion to inequality. Within the context of this paper, we will be focusing on welfare functions that marginalize over all possible outputs given possible sets of $k$ revealed features.
\begin{equation}
\label{eq:welfare}
    W(k) = \frac{1}{1 - \gamma} \mathbb{E}\left[\sum_{\text{users}} \text{Utility(user)}^{1 - \gamma}\right].
\end{equation}
Note that our social welfare function captures expected ex-post
welfare---i.e.,~measured over \textit{realized} outcomes, as opposed to ex ante
expected outcomes. Intuitively, this is the difference
between $E[U^{1-\gamma}]$ vs. $E[U]^{1-\gamma}$, where $U$ represents a user's utility.

The exact formulation of the social welfare function we consider in this setup can be found in Appendix \ref{app-derivations}. Overall, this family of social welfare functions are widely used in related literature in computer science (e.g. \cite{pardeshi2024learning, cousins2023revisiting, qi2000new, hendrycks2025introduction}) and economics (e.g. \cite{atkinson1970measurement, blackorby1978measures}). Choosing such an objective function is desirable because it directly encodes trade-offs between equity and efficiency, which also provides a parameter that can enable alignment with different ethical or policy priorities.

There are several natural choices of $\gamma$ with such a welfare function: $\gamma = 0$ yields a utilitarian welfare function (as was covered in the previous section), $\gamma = 1$ yields the Nash welfare (the product of utilities), and $\gamma = \infty$ implies maximin fairness.\footnote{More specifically, the Nash welfare is given by the limit of~\eqref{eq:welfare} as $\gamma$ approaches 1.}

\subsection{Optimal Response Policy}

In principle, the LLM's optimal response might be randomized. Given information
provided by the user, the LLM's welfare-maximizing response could be randomized.
In fact, we show that this is not the case in our setting. Deterministic
responses suffice:

\begin{restatable}{lemma}{det}
\label{lem:det}
    In order to maximize ex-post social welfare, the best course of action for an LLM is to deterministically return a single response given the revealed features.
\end{restatable}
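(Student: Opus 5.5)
The approach is a linearity argument. Ex-post welfare is an expectation of a function of realized outcomes, so it is linear in the LLM's randomization. Fix $k$ and a realized set of revealed features $r$. A possibly randomized response policy is a probability distribution $\pi_r$ over the $2^{n-k}$ completions $x$ of the unrevealed bits. I would write the welfare in~\eqref{eq:welfare} by conditioning first on $r$ and then on the LLM's draw:
\[
W(k) = \frac{1}{1-\gamma}\sum_{r}\Pr[r]\sum_{x}\pi_r(x)\, \mathbb{E}\left[(n-k-B(x))^{1-\gamma}\,\middle|\, r\right],
\]
where $B(x)$ is the number of mismatches between $x$ and the user's hidden bits. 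The inner conditional expectation is over the user's unrevealed preferences given $r$, which by the preference model does not depend on $\pi$. Call this quantity $g_r(x)$. Because welfare is ex post, the power $1-\gamma$ is applied inside the expectation, to realized utilities. The randomization of the LLM and the user's type are independent given $r$, so the expectation factorizes as written.

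The key steps are as follows.

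\begin{enumerate}
\item Make the decomposition above rigorous, using the formalization in \Cref{app-derivations}. The welfare then becomes a sum over $r$ of $\Pr[r]\sum_x \pi_r(x)\,\frac{g_r(x)}{1-\gamma}$. Since the choice for each $r$ enters only its own term, the optimization separates across revealed sets.
\item For each fixed $r$, this term is linear in the simplex variable $\pi_r$. A linear function on a simplex attains its maximum at a vertex, namely the point mass on some $x^*\in\arg\max_x g_r(x)/(1-\gamma)$. Any optimal randomized policy is a mixture of such maximizers, so the deterministic policy $x^*(r)$ is optimal. It is strictly better unless every $x$ in the support is itself a maximizer.
\item Handle the sign of $1-\gamma$. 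If $\gamma>1$, dividing by $1-\gamma$ flips the sign, but the objective is still linear in $\pi_r$, so the vertex argument is unchanged. For the Nash case $\gamma=1$, use the limiting form $\mathbb{E}[\log U]$, which is again linear in $\pi_r$. For $\gamma=\infty$ (maximin), interpret the objective as the limit of the $\gamma<\infty$ objectives, or directly as an ex-post expectation of the realized minimum, which is linear in $\pi_r$.
\end{enumerate}

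I expect the main obstacle to be the boundary behavior of realized utilities. Since $U=n-k-B$ can be $0$, or even negative when $k$ is large, $U^{1-\gamma}$ and $\log U$ may be $-\infty$ or undefined. I would first restrict attention to the parameter range where utilities are positive. Otherwise I would adopt the convention, as in the appendix, that such values are $-\infty$ and observe that linearity over the extended reals still yields a vertex maximizer.

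The maximin case needs the other subtle check: that the paper's ex-post objective really is an expectation over realized outcomes rather than a worst case over the LLM's draws. If it were a worst case over draws, randomizing could only lower the minimum, so a deterministic response would again be weakly optimal, and I would note this as an alternative one-line argument.
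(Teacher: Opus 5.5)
Your proof is correct and uses the same key fact as the paper's: ex-post welfare is linear in the response randomization, so a point mass on a maximizer does at least as well. Your decomposition, however, is different and in fact repairs a gap in the paper's. You condition on the revealed set $r$ and pool the two clusters through the posterior, so each $r$ carries one objective $\sum_x \pi_r(x)\,g_r(x)$ whose maximizer is a single, cluster-blind response depending on $r$. The paper instead splits welfare by cluster, takes $z^\star=\argmax_z G_{0,\gamma}(z,p)$ for cluster~$0$, and invokes symmetry for cluster~$1$. Read literally, this gives two different maximizers (all $0$s versus all $1$s when $\gamma=0$), neither of which depends on $a$. Beating the randomized policy on each cluster separately also does not show that one deterministic response beats it on the mixture. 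The repair is to optimize, for each $a$, the pooled term $\frac{1}{1-\gamma}\bigl(\alpha\Pr[a\mid\text{cluster }0]\,G_{0,\gamma}(z,p)+(1-\alpha)\Pr[a\mid\text{cluster }1]\,G_{1,\gamma}(z,1-p)\bigr)$, which is $\binom{k}{a}$ times your per-$r$ term. Working over all $2^{n-k}$ completions also lets you avoid the paper's unargued restriction to responses described by their number of zeros; that restriction is harmless by exchangeability, and it buys compatibility with the $f_\gamma(a)$ formulas used later. One correction on boundaries: $U=n-k-B$ is never negative in this model, but for $p>0$ every response gives $U=0$ with positive probability. So for $\gamma\ge 1$ your $-\infty$ convention makes all policies tie and the lemma vacuous. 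The paper does not use that convention; it shifts utility by $1$ (in the proof of Theorem~\ref{thm:maximin-policy}). After that shift every term is finite, your vertex argument applies verbatim, and the normalized $\gamma\to\infty$ objective is $-\Pr[U=0]$, again linear in $\pi_r$.
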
 

The intuition is straightforward: assume by contrast that the LLM used a randomized strategy. Users experience welfare based on the specific guesses made given revealed features. Because we are using ex-post social welfare, our utility over a randomized set of responses is simply equal to the expected utility, with randomness taken over our responses. Out of all possible guesses given revealed features, there will be a single one that maximizes welfare. By linearity of expectation, using a deterministic strategy of only returning those guesses given $k$ revealed features has welfare at least as high as the original randomized strategy, and strictly higher in every case without ties in welfare between items. 

Next, we consider the optimal response policy. Beyond the $\gamma = 0$ case
discussed in Section \ref{sec-querying}, interesting things start to happen to
the optimal response policy as we increase $\gamma$. The optimal response policy
for the purely utilitarian setting is no longer relevant as we prioritize
fairness, and is now a function of $\gamma$ as well. Let $a$ be the number of
1's in response to the LLM's $k$ initial queries. Then, denote the optimal
number of 1's in the LLM's final output as $f_\gamma(a)$.

\begin{restatable}{lemma}{optimalpolicy}
\label{lem:opt-policy}
    The optimal response policy may produce a mixed output of $0$s and $1$s.
\end{restatable}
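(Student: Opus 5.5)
The statement is existential, so we only need one explicit instance $(n,k,\alpha,p,\gamma)$ and one revealed vector in which the welfare-maximizing completion contains both $0$s and $1$s among the $n-k$ guessed features. By \Cref{lem:det} we may take the response to be deterministic given the revealed features. By symmetry among the unrevealed coordinates, welfare depends only on the number $m$ of $1$s the LLM places among the $n-k$ guessed positions. We then show that the optimizer $m^\ast = f_\gamma(a)-a$ lies strictly between $0$ and $n-k$.

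First, fix the revealed features with $a$ ones, and let $q$ be the posterior probability of cluster $1$. Conditional on the cluster, each unrevealed bit is independently $1$ with probability $1-p$ (cluster $1$) or $p$ (cluster $0$). If the LLM outputs $m$ ones and $n-k-m$ zeros, the mismatch count is
\[
B = X + Y,
\]
where $X$ is the number of zeros among the $m$ positions guessed $1$ and $Y$ is the number of ones among the $n-k-m$ positions guessed $0$. Conditional on the cluster, $X$ and $Y$ are independent binomials. So the contribution of this revealed pattern to welfare is proportional to
\[
g(m) = \frac{1}{1-\gamma}\sum_{c\in\{0,1\}} \Pr[c]\,\mathbb{E}\left[(n-k-B)^{1-\gamma}\mid c\right].
\]
Patterns with different revealed vectors decouple, so optimizing the response for each pattern separately is optimal.

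Second, we pick parameters that make the mixed output obviously better. The cleanest choice is the maximin limit $\gamma\to\infty$, or equivalently a large finite $\gamma$, with $p$ small, $k=0$ (so $q=1-\alpha$), $\alpha=1/2$, and $n=2$. For $m\in\{0,2\}$, with probability about $1/2$ the user is in the opposite cluster and has utility $0$ (all bits wrong). For $m=1$, every user with an unflipped vector gets utility exactly $1$. As $\gamma$ grows, $U^{1-\gamma}/(1-\gamma)$ heavily penalizes utility $0$. We should avoid $U=0$ raised to a negative power, so we either add a small constant to the utility or compare via the limit, in which the minimum realized utility dominates. For $m=1$ the minimum utility over users occurs only with probability $O(p)$ events, whereas $m\in\{0,2\}$ yields utility $0$ with probability $\approx 1/2$. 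To keep things finite, we can instead take $n=3$, $k=1$ and use the convention on the utility floor the appendix adopts. The fallback is Nash welfare ($\gamma\to1$, i.e.\ maximizing $\mathbb{E}[\log U]$) with a shift.

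Third, we verify the comparison explicitly. We compute $g(m)$ for each admissible $m$ as a finite sum of binomial terms in $p$, expand to first order in $p$, and show that $g(1)>\max(g(0),g(n-k))$ for small $p$ and the chosen $\gamma$. Since the contrary statement would require $f_\gamma(a)\in\{a,\,a+n-k\}$, this contradicts it and proves the lemma.

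The main obstacle is the handling of zero utilities under $\gamma\ge1$, because $0^{1-\gamma}$ is infinite, together with matching the paper's exact formalization in \Cref{app-derivations}. I would first try a concrete finite $\gamma\in(0,1)$, say $\gamma=1/2$, which avoids the singularity. I would check numerically for $n=2$ or $n=4$, $\alpha=1/2$, $k=0$, small $p$ whether the half-and-half output beats the all-$0$ output under $\mathbb{E}[\sqrt{U}]$. Concavity of $\sqrt{\cdot}$ favors guaranteeing $U\approx n/2$ to everyone over a $1/2$ chance of $n$ and a $1/2$ chance of $0$, and that comparison is $\sqrt{n/2}$ versus $\tfrac12\sqrt{n}$, which the mixed output wins, so this should work directly.
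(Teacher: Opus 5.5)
Your proof is correct, and it takes a different route from the paper's. The paper's proof reads the result off a numerically computed plot (Figure~\ref{fig:optimal-policy}): at $\gamma=10$, $n=6$, $k=3$, $\alpha=1/2$, after a revealed pattern containing both a $0$ and a $1$, the plotted optimum contains both values.

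You instead give a closed-form instance with $\gamma=1/2$, $k=0$, $\alpha=1/2$. At $p=0$ the objective is proportional to $\tfrac12(\sqrt{n-m}+\sqrt{m})$, which is strictly concave in $m$. Since $\sqrt{n/2}>\tfrac12\sqrt n$, every maximizer is interior. Strictness survives for small $p>0$ because each $g(m)$ is a polynomial in $p$. For $n=2$ it in fact holds for every admissible $p$: with $g(m)=2\,\mathbb{E}[\sqrt{U}]$ as you defined it, one computes exactly
\[
g(1)-g(0)=(2-\sqrt2)\,(1-2p)^2>0,
\]
and $g(2)=g(0)$ by symmetry.

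What your route buys is rigor and independence from unstated conventions. As you noticed, for $\gamma\ge 1$ and $p>0$ every response leaves some user with utility $0$ with positive probability, so the literal objective is $-\infty$ for all responses. The paper's $\gamma=10$ example needs $p>0$ for a mixed revealed pattern to occur at all. It therefore must rely on some convention, such as the shift by $1$ that appears only inside the proof of Theorem~\ref{thm:maximin-policy}. Taking $\gamma<1$ removes the issue entirely.

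What the paper's example buys is a closer match to the surrounding narrative: it exhibits hedging \emph{after} informative queries under strong inequality aversion. In your instance ($\alpha=1/2$, $k=0$) the utilitarian objective is indifferent among all outputs, so it does not by itself contrast with Lemma~\ref{thm:util-policy}. If you want that contrast, take $\alpha=0.6$ and even $n$. The utilitarian optimum is then all $0$s. At $\gamma=1/2$, $p=0$, the half-and-half output scores $\mathbb{E}[\sqrt U]=\sqrt{n/2}\approx 0.71\sqrt n$, against $0.6\sqrt n$ and $0.4\sqrt n$ for the two pure outputs.

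In the write-up, drop the maximin, Nash, and $n=3$, $k=1$ detours, which as stated are not proofs, and drop the first-order expansion in $p$. The $\gamma=1/2$ computation plus continuity in $p$ (or the exact $n=2$ identity) is the whole proof.
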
 

In contrast the results given in Lemma \ref{thm:util-policy}, the optimal
inequality-aware policy is to hedge, rather than simply cater to the majority
tastes. As a result, the LLM will make a combination of choices that favor
each cluster.

\begin{figure}[h!]
    \centering
    \includegraphics[width=.9\linewidth]{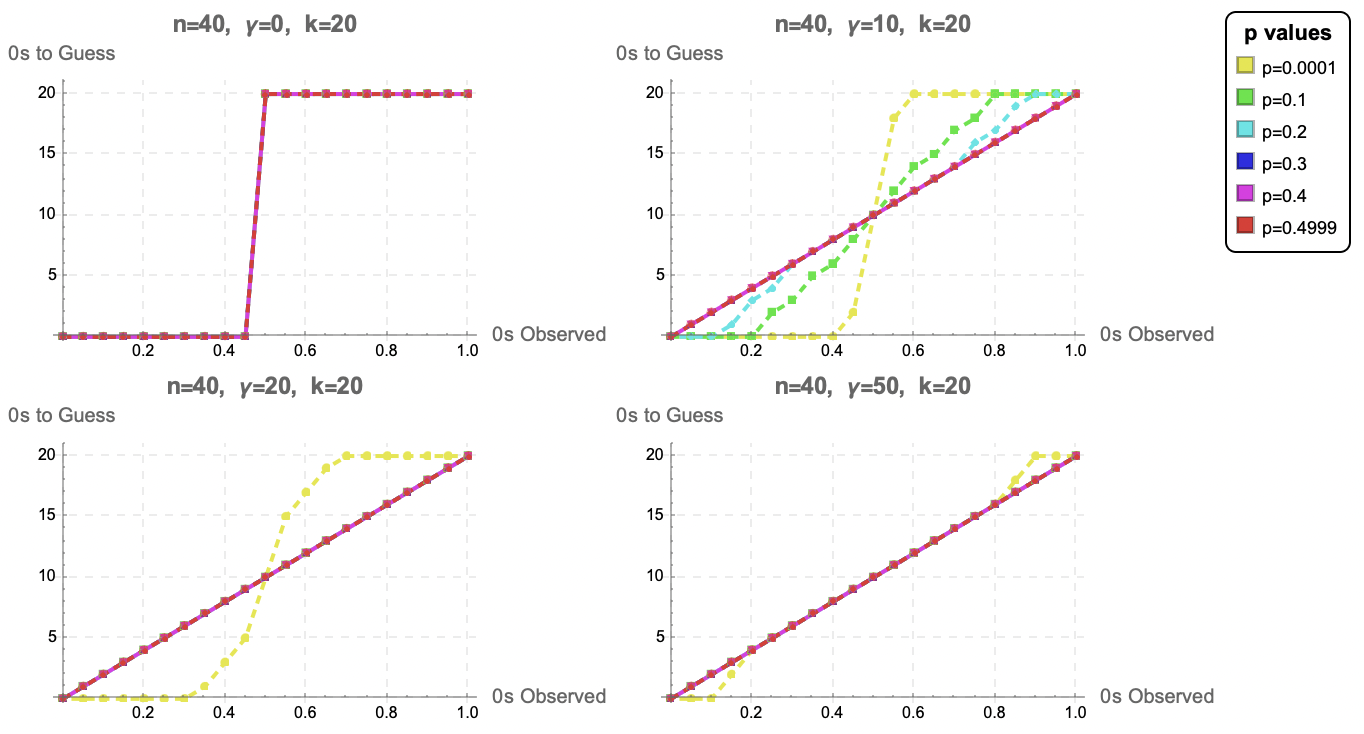}
    \caption{The number of $0$s to output based on the optimal response policy $n - k - f_\gamma(a)$ as a function of $a$ for various $p$ values when $\alpha = 1/2$, $n = 40$, and $k = 20$. As $\gamma$ increases, the optimal policy for all noise rates gets closer to being linear in the composition of the observations.}
    \label{fig:optimal-policy-short} 
\end{figure}

As $\gamma$ becomes increasingly large, however, the policy does not simply become outputting half $0$s and half $1$s to hedge its guesses towards both clusters. Rather, it reflects a trade-off between guessing equal numbers of $0$s and $1$s and making use of the information learned from the revealed bits. In contrast to the result of Lemma \ref{thm:util-policy} where much of the solicited information may not directly impact the optimal response policy, the LLM actually makes use of all information learned from the revealed bits.

\begin{restatable}{theorem}{maximinpolicy} 
\label{thm:maximin-policy}
    For every fixed $k \leq n/2$, each set of revealed preferences $a$ results in a distinct optimal response policy $f_\gamma(a)$ for a maximin fairness objective ($\gamma \rightarrow \infty$). 
\end{restatable}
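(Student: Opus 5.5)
The plan is to make the $\gamma \to \infty$ limit concrete and then solve the resulting one-dimensional problem in closed form. Fix $k$ and a revealed vector with $a$ ones. By Lemma~\ref{lem:det} the LLM deterministically outputs $f$ ones and $n-k-f$ zeros on the $n-k$ unrevealed coordinates. Since $p>0$, every completion of the hidden bits has positive probability, so the worst realized utility is always the one where every guessed bit is wrong. This worst case does not depend on $f$, so plain maximin is degenerate. I would therefore read the maximin objective as the $\gamma\to\infty$ limit of~\eqref{eq:welfare} (treating the utility-$0$ outcome by the usual shift or limiting convention). In that limit the welfare ordering is lexicographic: first minimize the probability of the worst outcome, then of the next-worst, and so on. The main work is to show that the first lexicographic level already pins down $f_\gamma(a)$ up to ties.

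Next I would compute that probability. Let $r=(1-p)/p>1$. The posterior weights on the clusters given $a$ are proportional to $\alpha p^k r^{-a}$ (cluster $0$) and $(1-\alpha)p^k r^{a}$ (cluster $1$). The event that all $n-k$ guesses are wrong means the hidden bits are exactly the complement of the output. That complement has probability $p^{n-k} r^{f}$ under cluster $0$ and $p^{n-k} r^{n-k-f}$ under cluster $1$. Up to a factor independent of $f$,
\[
g_a(f) \;=\; \alpha\, r^{f-a} + (1-\alpha)\, r^{\,a+n-k-f}.
\]
This is strictly convex in $f$. Setting the two terms equal gives the continuous minimizer
\[
f^*(a) \;=\; a + \tfrac{n-k}{2} + \tfrac12 \log_r\!\tfrac{1-\alpha}{\alpha}.
\]
This is affine in $a$ with slope exactly $1$. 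Because $g_{a+1}(f+1)=g_a(f)$, the integer minimizer also shifts by exactly one when $a$ increases by one, and any ties are preserved by the same shift. Hence distinct $a$ yield distinct optimal responses, which is the claim.

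The main obstacle is the boundary constraint $0\le f\le n-k$. The shift argument only works while the unconstrained optimum stays inside this interval for all $a\in\{0,\dots,k\}$. Once $f^*$ clips, neighboring values of $a$ can collapse to the same output $0$ or $n-k$. This is exactly where the hypothesis $k\le n/2$ must enter, by guaranteeing that the $k+1$ distinct values fit among the $n-k+1$ possible outputs. I would first check whether my orientation of the tradeoff, with $f$ increasing in $a$, matches the paper's convention. In the case $\alpha=1/2$ the naive range $[\tfrac{n-k}{2},\tfrac{n-k}{2}+k]$ only fits under $k\le n/3$. If that mismatch persists, I would look at the lower lexicographic levels (probabilities of one, two, \dots\ correct guesses) to break the collapse near the boundary. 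Alternatively, I would redo the worst-case event with the correct indexing of revealed versus guessed mismatches. Either route should show that the optimal policy remains injective in $a$ for all $a$ when $k\le n/2$.
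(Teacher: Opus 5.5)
You follow the paper's route. In the $\gamma\to\infty$ limit only the event that every guessed bit is wrong survives. Its probability is a strictly convex sum of two exponentials in the output composition, and the stationary point is affine in $a$ with unit slope. Your shift identity $g_{a+1}(f+1)=g_a(f)$ handles integrality more cleanly than the paper's real-variable derivative, though see below about ties.

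The $k\le n/3$ mismatch is an algebra slip, and you leave it unresolved. Under cluster $0$ the likelihood of $a$ ones among the $k$ revealed bits is $p^a(1-p)^{k-a}=(1-p)^k r^{-a}=p^k r^{k-a}$. You wrote $p^k r^{-a}$, dropping a factor $r^k$ that is not common to the two clusters; your worst-case-event probabilities are fine. Corrected,
\[
g_a(f)=\alpha\, r^{f+k-a}+(1-\alpha)\, r^{\,n-k+a-f},\qquad f^*(a)=a-k+\tfrac n2+\tfrac12\log_r\tfrac{1-\alpha}{\alpha}.
\]
This is the paper's stationary point $n/2-a$ once $f$ counts ones rather than zeros. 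For $\alpha=1/2$ the window $[\tfrac n2-k,\tfrac n2]$ is centered in $[0,n-k]$ and fits exactly when $k\le n/2$; that is where the hypothesis enters. Your first fallback could never have repaired clipping. If the real minimizer of the strictly convex $g_a$ lies outside $[0,n-k]$, then $g_a$ is strictly monotone on that range, so the endpoint is the unique first-level minimizer and lower levels are never consulted.

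Two further gaps remain. First, ``ties are preserved by the shift'' does not give distinctness. For $\alpha=1/2$ and odd $n$ the first-level minimizers are $\{a-k+\tfrac{n-1}{2},\,a-k+\tfrac{n+1}{2}\}$, and consecutive tie sets overlap. So a first-level-optimal policy may give $a$ and $a+1$ the same output, and here you do need the second level (probability of exactly one correct guess). Write $Q_j$ for the probability of a full vector with $\tfrac{n\pm j}{2}$ ones. The second-level probability of the lower option minus that of the upper option is $(2a-k)(Q_3-Q_1)$, with $Q_3>Q_1$. Hence the limit policy takes the lower value when $2a<k$ and the upper when $2a>k$, which is strictly increasing in $a$. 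The paper's continuous argument skips this step too.

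Second, your closing claim that injectivity holds for every $\alpha$ once $k\le n/2$ is false, so no route can deliver it. Suppose $\tfrac{\alpha}{1-\alpha}>r^{n}$. Then $\alpha r^{f+k-a}>(1-\alpha)r^{n-k+a-f}$ for all $f\ge 0$ and $a\le k$. Every $g_a$ is therefore strictly increasing on $[0,n-k]$, and the maximin output is $f(a)=0$ for all $a$. A milder case, $p=0.1$, $\alpha=0.99$, $k=n/2$, already gives $f(0)=f(1)=0$. The result must be stated for $\alpha=1/2$, or under a condition keeping every $f^*(a)$ inside $[0,n-k]$. That is also all the paper's proof covers: for $\alpha\ne 1/2$ it only says to clip, which does not yield distinctness.
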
 

For some intuition about why this is true, consider that maximin utility is maximized in the setting when there are the fewest number of users who are maximally worst-off (i.e. receive utility $0$). Moreover, users with the lowest utility are those that have exact opposite values for every term guessed by the LLM. As a result, this tells us that an LLM should guess the remaining bits to be the opposite of the least-likely user, given the bits we've observed. We thus arrive at the somewhat surprising result that an LLM should sometimes produce responses that vary sharply with the responses given when the goal is maximin fairness, depending on the preferences of the least-likely user, as we can notice in the rightmost panel of Figure \ref{fig:optimal-policy-short}. Importantly, this also clarifies why greater inequality aversion increases the value of information: the more sensitive the objective is to worst-case outcomes, the more the model directly benefits from information in queries. While this explains how the model should use information once it is obtained, it does not yet address the separate question of how much information should optimally be elicited in the first place, which we will consider next.

\subsection{Optimal Querying Policy}

Another interesting question to consider when generalizing our setup to prioritize fairness is how this affects the optimal number of queries $k$ that the LLM should ask the user before generating content. 

Intuitively, one might think that it is best to ask $n$ queries and uniquely identify each user's preferences, making sure everyone is equally happy. However, while this minimizes inequality, it does not maximize the inequality-averse social welfare we consider, as all users receive utility $0$ in this plan. Under an inequality-averse social welfare objective, the goal is to balance minimizing inequality with maximizing total utility, leading to the adoption of different strategies.

\begin{restatable}{lemma}{kdecreasing}
\label{lem:k-decreasing}
    The optimal number of queries is non-monotonic as a function of $\gamma$ for $p > 0$.
\end{restatable}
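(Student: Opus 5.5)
Non-monotonicity is an existence claim, so we will exhibit, for a fixed instance $(n,\alpha,p)$ with $p>0$, three inequality-aversion levels $\gamma_1<\gamma_2<\gamma_3$ whose optimal query counts $k^*(\gamma)$ are not monotone, for instance $k^*(\gamma_1)<k^*(\gamma_2)$ and $k^*(\gamma_2)>k^*(\gamma_3)$. The natural anchors are the two endpoints $\gamma=0$ and $\gamma\to\infty$, with some intermediate $\gamma$ in between. By Lemma~\ref{lem:det} we may restrict to deterministic response policies. For each $k$, the welfare $W(k)$ is then a finite sum over revealed counts $a$ and user types, with $f_\gamma(a)$ chosen optimally. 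Hence $k^*(\gamma)$ is computable from closed forms, and the argument reduces to evaluating $\max_{f}W(k)$ for small $k$.

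The key steps are as follows.

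(i) At $\gamma=0$, Lemma~\ref{lem:extreme-alpha} (or Lemma~\ref{lem:queries} with $p$ near $1/2$) gives an instance where $k^*(0)=0$.

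(ii) At an intermediate $\gamma$, for example $\gamma=1$ (Nash welfare, i.e.\ the limit of $\mathbb{E}\sum\log U$), we show that $k^*\ge 1$. With $k=0$ and no hedging, the users whose realized $B$ is close to $n$ have utility near $0$. The concave objective heavily penalizes these users, and one query, which pins down the cluster far better, raises their utility enough to outweigh the loss of one unit for everyone.

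(iii) As $\gamma\to\infty$, the objective becomes the minimum realized utility. Since $p>0$, every preference vector has positive probability. For any $k$ and any deterministic response, some user disagrees on all $n-k$ guessed bits, so the minimum utility is $n-k-(n-k)=0$ for every $k<n$, and it is also $0$ at $k=n$. The limit is therefore flat, and we must use the rate at which the welfare is dominated. The leading term of $W(k)$ as $\gamma\to\infty$ comes from the smallest attained utility level together with its total probability mass (this is the ``fewest worst-off users'' reasoning behind Theorem~\ref{thm:maximin-policy}).

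For step (iii), the plan is to show that the limiting ordering of the $k$'s favors small $k$ again. Every $k<n$ has worst utility $0$, and the next tie-breaker is the probability of the worst-off users. With $k$ queries and the response $f_\infty(a)$, that probability is $\sum_a \Pr(\text{revealed }a)\Pr(\text{complement of the output}\mid a)$. For each $a$ this is the likelihood of one specific full vector, so its order is like $p^{\,\cdot}$, and we compare the exponents across $k$. Because the $U^{1-\gamma}$ terms for $U=0$ are degenerate, we will instead work at a large but finite $\gamma$, or use the convention that the next-smallest utility level decides. We then pick the instance parameters so that $k^*$ for large $\gamma$ is strictly smaller than $k^*(1)$. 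A cleaner alternative is to use $k=n$ as the comparison point, which gives everyone utility $0$ and is never optimal, and argue directly that $k^*(\gamma)$ is bounded as $\gamma\to\infty$.

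The main obstacle is this large-$\gamma$ regime, where zero utilities make $U^{1-\gamma}$ blow up and the comparison is delicate. If the asymptotics resist a clean treatment, we will fall back on a concrete small instance (say $n=3$ or $4$ with explicit $\alpha,p$). There we enumerate $W(k)$ for $k=0,1,2$ at three specific $\gamma$ values and verify the non-monotone pattern by direct computation, as in the figures.
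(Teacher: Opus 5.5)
Your overall shape ($k^*=0$ at $\gamma=0$, $k^*\ge1$ at a moderate $\gamma$, $k^*=0$ again for large $\gamma$) is right, but step (ii) fails as written and the proposed workaround does not repair it. You correctly note in step (iii) that when $p>0$ every policy, for every $k$ (including $k=n$), leaves some user with realized $U=0$ with positive probability. The consequence is not confined to the limit. For every $\gamma\ge1$ the term $\log U$ or $U^{1-\gamma}/(1-\gamma)$ equals $-\infty$ at $U=0$, so $W(k)=-\infty$ for all $k$. Step (ii) at $\gamma=1$ therefore compares $-\infty$ with $-\infty$. Working at a large but finite $\gamma$ hits the same wall; on the unregularized objective only $\gamma<1$ ranks the $k$'s at all.

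The missing ingredient is the regularization the paper uses in the proof of Theorem~\ref{thm:maximin-policy}: replace $U$ by $U+1$. Some such shift is also implicit in the $\gamma$-sweeps of Figure~\ref{fig:optimal-queries}. With the shift, step (iii) becomes rigorous, but the decisive comparison is a count, not an exponent comparison. For $\gamma>1$ you minimize $S_\gamma=\mathbb{E}[(U+1)^{1-\gamma}]$, and $\Pr(U=0)\le S_\gamma\le\Pr(U=0)+2^{1-\gamma}$. Let $\mu=\min_x\Pr(x)>0$. With $k=0$, choose the response to be the complement of a least likely vector; then $\Pr(U=0)=\mu$. With $k\ge1$, each of the $2^k$ revealed patterns $r$ has its own worst-off vector $(r,\bar v_r)$. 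These vectors are distinct, so $\Pr(U=0)\ge2^k\mu\ge2\mu$. Hence $k^*(\gamma)=0$ whenever $\gamma>1+\log_2(1/\mu)$, in every instance with $0<p<1/2$. Non-monotonicity then reduces to exhibiting a single increase $k^*(\gamma_1)<k^*(\gamma_2)$.

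That increase is what step (ii) must supply, and as written it is only a heuristic. First, it compares against the unhedged $k=0$ response. For $\gamma>0$ the optimal zero-query response can hedge (Lemma~\ref{lem:opt-policy}), and hedging is exactly what competes with querying. Second, the instance must be compatible with step (i), and the $p$-near-$1/2$ option is self-defeating. At $p=1/2$, $U\sim\mathrm{Bin}(n-k,1/2)$ whatever the response, so $k=0$ is optimal for every $\gamma$ by stochastic dominance; by continuity this persists for $p$ near $1/2$ at any fixed $\gamma$.

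The extreme-$\alpha$, small-$p$ option does work. Take $n=3$, $\alpha=0.9$, $p=0.01$ with shifted utility:
\begin{itemize}
\item At $\gamma=0$, $k^*(0)=0$, since $\eu{0}=2.676>2$.
\item At $\gamma=3$, the best zero-query response (one hedged bit) gives $\mathbb{E}[(U+1)^{-2}]\approx0.1276$. By contrast, $k=1$ gives about $0.1227$ and $k\ge2$ gives at least $1/4$, so $k^*(3)=1$.
\item For $\gamma>1+\log_2(1/\mu)\approx10.9$, the bound above gives $k^*(\gamma)=0$.
\end{itemize}

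A finite check like this is how the paper argues: its proof reads the pattern off Figure~\ref{fig:optimal-queries}. The $n=3$, $p=0.1$ curve it cites only drops from $1$ to $0$; the rise comes from the $n=10,25$ curves. So your fallback is essentially the paper's route. Once the shift is in place, your large-$\gamma$ analysis is a genuine analytic improvement over reading the figure.
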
 

\begin{figure}[h!]
    \centering
    \includegraphics[width=.9\linewidth]{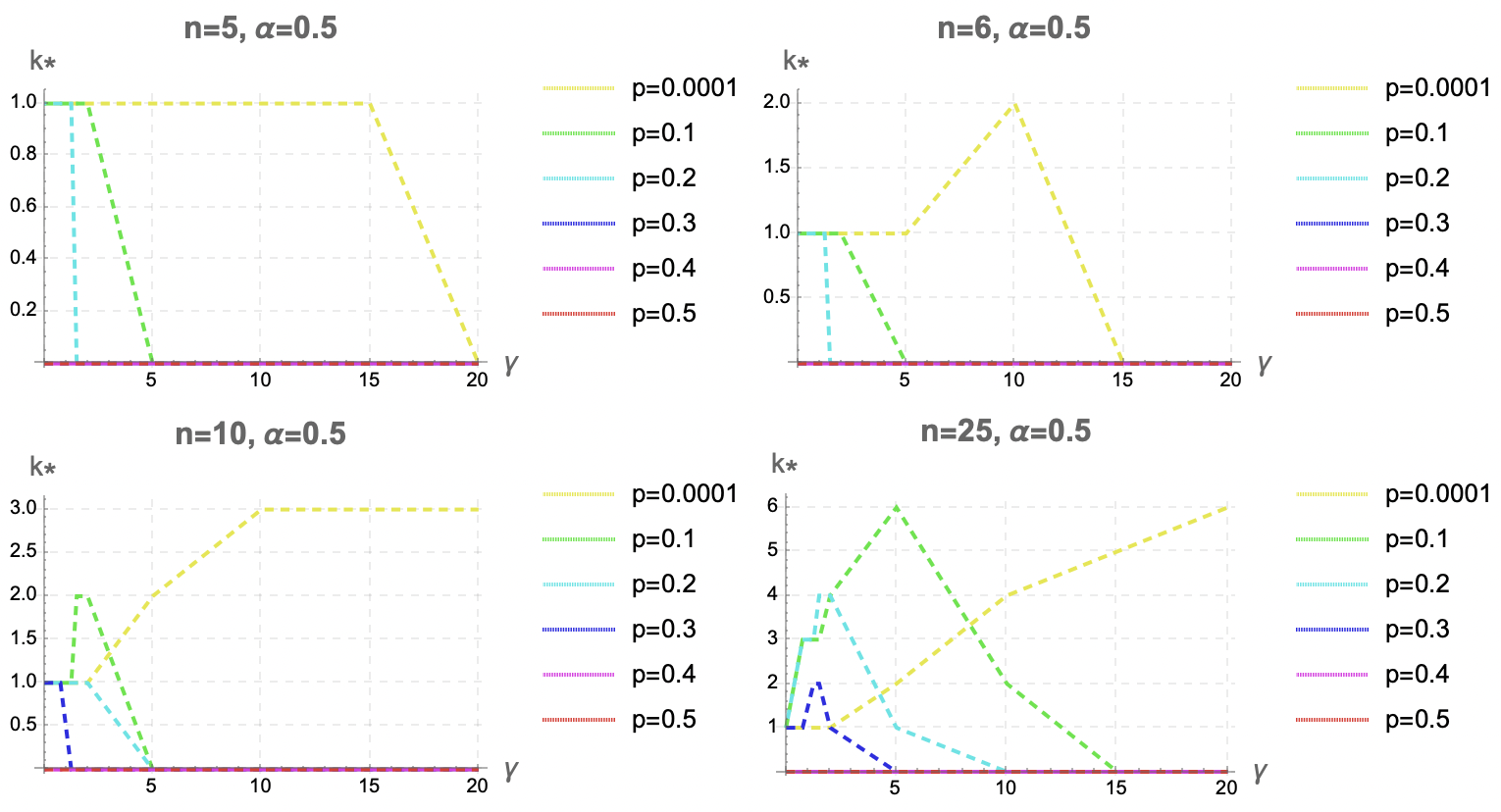}
    \caption{Optimal choice of $k$ vs $\gamma$ for various values of $p$ and $\alpha = 1/2$. As $\gamma$ initially increases, the optimal querying policy grows for small noise rates $p$. As $\gamma$ continues to grow however, the optimal querying policy begins to decrease again.}
    \label{fig:optimal-queries-short}
\end{figure}

This observation is not immediately clear. As mentioned above, one might expect
that asking more questions would always be beneficial, as it provides the LLM
with additional information and therefore a higher likelihood of correctly
inferring the missing bits. However, this potential information gain must be
weighed against the cost associated with each query. When the noise rate $p$ is
very small, increasing the number of queries initially improves performance as
$\gamma$ increases, as the LLM can more reliably identify the user’s cluster of
origin. In Figure \ref{fig:optimal-queries-short}, this can be seen for the
choices of flipping probabilities $p = 0.1$ and $p = 0.0001$, as the choice of $k$ initially increases for $n = 10$ and $25$. However, for any nonzero $p$,
there comes a value of $\gamma$ where the marginal benefit of additional queries
is outweighed by their cost. This trend can be seen in Figure
\ref{fig:optimal-queries-short} when for noise rate $p = 0.1$, we see the
optimal $k$ fall back down to $0$ as $\gamma$ continues to grow for $n = 10$ and
$n = 25$. As $\gamma \rightarrow \infty$, the optimal strategy switches to be
simply guessing all bits of a user's preference vector. Intuitively, for a
maximin welfare objective, information does not help.

\section{Diversity and Homogeneity in Outputs}
\label{sec-diversity}
Finally, we turn to the connection between fairness and output diversity.
We might intuitively expect a positive relationship between these: A preference for fairness should lead to a higher emphasis placed on personalization, thus reducing output homogenization.
While this can be true in specific instances, our main result in this section will be to show that the intuition given above does not hold in general.

Recall that by \Cref{lem:det}, the LLM's optimal response policy is a deterministic function of the strength of its belief that the user is from cluster 0 vs. cluster 1.
As defined above, $f_\gamma(a)$ specifies how many 1's are in the LLM's inferred
preference vector, given the user's response $a$ to the LLM's initial queries.
To quantify content heterogeneity, we will characterize how the entropy over $f_\gamma(a)$ varies as $\gamma$ increases.
Formally, this is $\mathbb{E}[H(f_\gamma(a))]$, where $H$ is the standard Shannon entropy and the randomness is over the user's response $a$.
Note that the distribution of $a$ depends on the number of queries $k$ asked by the LLM.
Here, we assume the LLM follows the optimal query policy for that value of $\gamma$. With this, we can state our final result:
\begin{restatable}{lemma}{entropydecrease}
\label{lem:entropy-decrease}
    $\mathbb{E}[H(f_\gamma(a))]$ (the entropy of responses generated by the optimal query and response policies, given a particular $\gamma$) is non-monotonic in $\gamma$. 
\end{restatable}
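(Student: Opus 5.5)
Non-monotonicity is an existence statement, so the plan is to exhibit one concrete instance $(n, \alpha, p)$ and three values $\gamma_1 < \gamma_2 < \gamma_3$ at which $\mathbb{E}[H(f_\gamma(a))]$ goes up and then down (or down and then up). Here $\mathbb{E}[H(f_\gamma(a))]$ is the Shannon entropy of the distribution of the output statistic $f_\gamma(a)$ when $a \sim \mathrm{Binomial}$-mixture under the optimal $k^*(\gamma)$. I would take $\alpha = 1/2$ and a small but nonzero $p$, matching the regime of Lemma~\ref{lem:k-decreasing} and Figure~\ref{fig:optimal-queries-short}. There the optimal number of queries first rises with $\gamma$ and then falls back to $0$. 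The entropy of outputs is driven almost entirely by $k^*(\gamma)$, because with $k=0$ the variable $a$ is constant. By Lemma~\ref{lem:det} the response is then a single deterministic vector, so the entropy is exactly $0$.

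The key steps, in order:

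\textbf{Step 1: large $\gamma$.} Using the discussion after Lemma~\ref{lem:k-decreasing}, that for $p>0$ and $\gamma$ large enough the optimal $k$ returns to $0$, conclude that $\mathbb{E}[H(f_{\gamma_3}(a))] = 0$.

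\textbf{Step 2: small $\gamma$.} For $\gamma_1 = 0$, choose parameters, for instance small $n$ or a $p$ near the threshold of Lemma~\ref{lem:queries}, so that $k^*(0) = 0$. This again gives entropy $0$.

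\textbf{Step 3: intermediate $\gamma$.} Exhibit $\gamma_2$ with $k^*(\gamma_2) \ge 1$ and with $f_{\gamma_2}$ non-constant in $a$. It suffices to show $f_{\gamma_2}(0) \ne f_{\gamma_2}(1)$ when $k=1$. With $\alpha=1/2$, the symmetry of the model under swapping $0 \leftrightarrow 1$ gives $f(1) = n-1-f(0)$. These differ unless $f(0) = (n-1)/2$, which cannot occur when $n$ is even. Since both values of $a$ occur with probability $1/2$, the entropy is $\log 2 > 0$. The sequence of entropies $0, \log 2, 0$ (or at least positive followed by $0$, with $0$ before) proves non-monotonicity.

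The main obstacle is Step 3 combined with Step 2: I must certify simultaneously that $k^*(0)=0$ and $k^*(\gamma_2)\ge 1$ for the same $(n,p)$. In other words, I need a concrete parameter choice where inequality aversion actually induces querying. I would first try $n=4$ and $p$ just above the utilitarian threshold at which one query stops paying off. At such a $p$ the expected utilities for $k=0$ and $k=1$ are nearly equal under $\gamma=0$. A moderate $\gamma$, say $\gamma=1$ (Nash welfare), then favors $k=1$, because querying reduces the mass of users at very low utility. This comparison is a finite computation of $W(k)$ over binomial mixtures for $k\in\{0,1,2,3\}$, which can be done explicitly.

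If the clean $0$-at-$\gamma=0$ instance proves awkward, the fallback is to use only Steps 1 and 3 together with a nondegenerate intermediate rise. I would find $\gamma_1<\gamma_2$ where $k^*$ increases from $1$ to $2$ and the entropy strictly increases, for example because $f$ becomes injective in $a$ as in Theorem~\ref{thm:maximin-policy}. Step 1 then supplies the eventual decrease to $0$.
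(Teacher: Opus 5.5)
\textbf{The failing step.} The step that fails is the one you flag as the main obstacle: finding $(n,p)$ with $k^*(0)=0$ but $k^*(\gamma_2)\ge 1$. Your heuristic compares one query against the \emph{utilitarian} zero-query output, which sets every bit to one cluster's value. For $\gamma>0$, however, the zero-query planner hedges, and with $\alpha=1/2$ hedging is free. Every bit is marginally uniform, so every zero-query output has expected utility $n/2$. A balanced output also has the same utility distribution under both clusters, so it has no between-cluster spread at all. A query costs a sure unit and, to the extent its answer is used, reintroduces that spread.

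Your proposed instance fails concretely. Take $\alpha=1/2$, $n=4$, $s=p(1-p)$ and $V=U+1$. The balanced zero-query output has mean $3$ and $\Pr[V=1]=s^2$. One query followed by copying the revealed bit has mean $4-6s$ and $\Pr[V=1]=s(1-2s)$. At $p=0.22$, just above the threshold $p\approx 0.211$, one checks that $\mathbb{E}[(t-V)^+]$ is no larger for the balanced output at every $t$. This holds against the following, hedging, and mixed $k=1$ policies, and also against $k=2,3$. So the balanced output second-order stochastically dominates, and $k^*(\gamma)=0$ for every $\gamma\ge 0$. For example, at $\gamma=1$ one gets $\mathbb{E}[\ln V]\approx 1.055$ versus $1.010$ (following) and $0.929$ (hedging). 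The entropy is therefore identically $0$ on this instance.

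The same mechanism operates for other $(n,p)$ at $\alpha=1/2$, so you should not expect this route to work there. A zero-query baseline that inequality aversion breaks would need $\alpha\ne 1/2$, where hedging costs mean utility. In that case your Step 3 symmetry argument $f(1)=n-1-f(0)$ is unavailable.

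\textbf{The fallback and Step 1.} Your fallback is the viable route, and it is essentially the paper's proof, which is purely a numerical certificate. For $\alpha=1/2$, $n=25$, $p=0.1$, the computed optimal policies give $k^*=4,6,2$ at $\gamma=2,5,10$, with output entropies $1.23$, $2.42$, $1.50$. Here one also checks $k^*(0)=1$ with entropy $\log 2$, so the rise starts from a querying baseline, not a zero one. As written, though, your fallback only asserts that a rise exists. That computation is the entire content of the lemma, so it must actually be carried out or cited.

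Your Step 1 is a good substitute for the paper's $\gamma=10$ data point, but it rests on an informal remark the paper never proves. The remark is true and short to prove:
\begin{itemize}
\item First shift utility by $1$. This is needed anyway, since $U=0$ has positive probability and otherwise $W=-\infty$ for $\gamma\ge 1$.
\item Then $\mathbb{E}[(U+1)^{1-\gamma}]$ lies within $2^{1-\gamma}$ of $\Pr[U=0]$, uniformly over policies.
\item Next, $\Pr[U=0]=\sum_a\binom{k}{a}\pi(m_a)\ge 2^k\min_m\pi(m)$. Here $\pi(m)$ is the probability of a fixed preference vector with $m$ ones, and $m_a$ is the number of ones in the unique vector that matches the revealed bits and disagrees with every guessed bit.
\item With $k=0$ the bound $\min_m\pi(m)$ is attained, so $k^*(\gamma)=0$ once $2^{1-\gamma}<\min_m\pi(m)$.
\end{itemize}
This also refutes your heuristic directly: queries multiply the mass at the lowest utility level by at least $2^k$; they do not reduce it.
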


To provide some intuition, consider what happens when $\gamma$ grows from 0 to $\infty$.
For $\gamma = 0$, by \Cref{thm:util-policy}, the LLM infers the user's cluster of origin and caters exclusively to that cluster.
This effectively yields two possible outputs, based on the inferred cluster of origin, and low output entropy.
As $\gamma$ increases, following the optimal query policy, the LLM may ask more queries and receive more information from the user.
Because the LLM has more information to work with, it can produce a broader set of possible outputs.
For example, when asking $k=2$ queries, there are 3 categories of responses that the LLM can get: all 0's, all 1's, or one 0 and one 1.
By providing different responses in each of these cases, the LLM's output distribution spans a larger set of potential outputs.
And as we show in our proof, there are instances where this initially leads to higher output diversity.
However, as shown in \Cref{fig:optimal-queries-short}, the optimal number of queries eventually drops for sufficiently large $\gamma$.
As a result, the LLM no longer solicits much information from the user, meaning that the diversity of its output distribution must go down.

In a sense, this result is quite related to the data processing inequality: the entropy of the output distribution of a deterministic function is upper-bounded by the entropy of its input distribution.
With little information about an individual user's preferences, the LLM cannot offer much by way of output diversity (given that the optimal response policy is a deterministic function of the inputs.)
Given more queries, as long as the optimal response policy is sufficiently sensitive to user-supplied information, output diversity increases.

These findings show that there is no simple relationship between optimizing for fairness and optimizing for output diversity.
The relationship between fairness and output homogenization is mediated by the information-soliciting behavior of the LLM.
If fairness leads to an increase in more information solicited from users, we should expect to see a corresponding increase in output diversity.

\section{Empirics}
\label{sec-empirics}
The theoretical framework developed above relies on a highly stylized preference
model. A natural question is whether the insights carry over to real preference
data. Our primary goal with this section is to demonstrate that the optimal
number of queries to ask depends on one's level of inequity aversion. 

\subsection{Experiment Setup}

We consider the sushi preference dataset introduced in \cite{kamishima2003nantonac, kamishima2006efficient}, which contains the complete rankings of 10 different sushi by 5000 users. To convert ranking data to binary preference data, we encode the preference for any sushi ranked in the top 2 as a $1$, and any sushi ranked in the bottom 8 as a $0$. 

In order to map to our model, the goal of our experiment is to identify a user's preferences on each of the 10 pieces of sushi rather than to just recommend sushi they'd like. To simulate preference elicitation, we began by training a decision tree classifier where each split is chosen to maximize information gain about user preferences over the 10 sushi types. For a given $k$, a user's preferences for the first $k$ sushi types following tree order are revealed. The remaining $10 - k$ preferences are selected to be the most likely values conditional on a user's revealed preferences. 

In a general setting, it makes sense to model differing costs of queries. Hence, to allow for more flexibility, we modify our utility function to include an explicit per-query cost $c$, leading to a revised function of 
\begin{equation}\label{eq:cost-utility}
    U(k, B) = n - c \cdot k - B,
\end{equation}
where $B$ is the number of preferences guessed incorrectly, $k$ is the number of
queries, and $n = 10$. We will use this generalized utility function in our
definition of welfare and then analyze the optimal $k$ for different values of
$\gamma$.

\subsection{Fairness-Utility Tradeoff}

Figure \ref{fig:sushi-welfare} demonstrates how welfare changes as the optimal number of queries $k$ increases for different values of the inequality aversion parameter $\gamma$. For purely utilitarian welfare, we can see that it is not until $k = 3$ queries that the preference information learned is sufficiently valuable to increase welfare. Moreover, welfare becomes optimal at $k = 4$ and quickly begins to trend down from its base level as the number of queries increases further. 

When we become more inequality-averse and increase $\gamma$, we see that the
optimal $k$ changes. When $\gamma$ is small, we see that the optimal $k$ remains
at $4$. However, when $\gamma = 5$, we
see that the optimal $k$ increases to $9$. As $\gamma$ further increases to
$10$, we see that it becomes optimal for welfare to directly elicit a user's
entire set of preferences. As in our theoretical setup, the optimal information
elicitation behavior depends on our level of inequality-aversion $\gamma$.

\begin{figure}
    \centering
    \includegraphics[width=\linewidth]{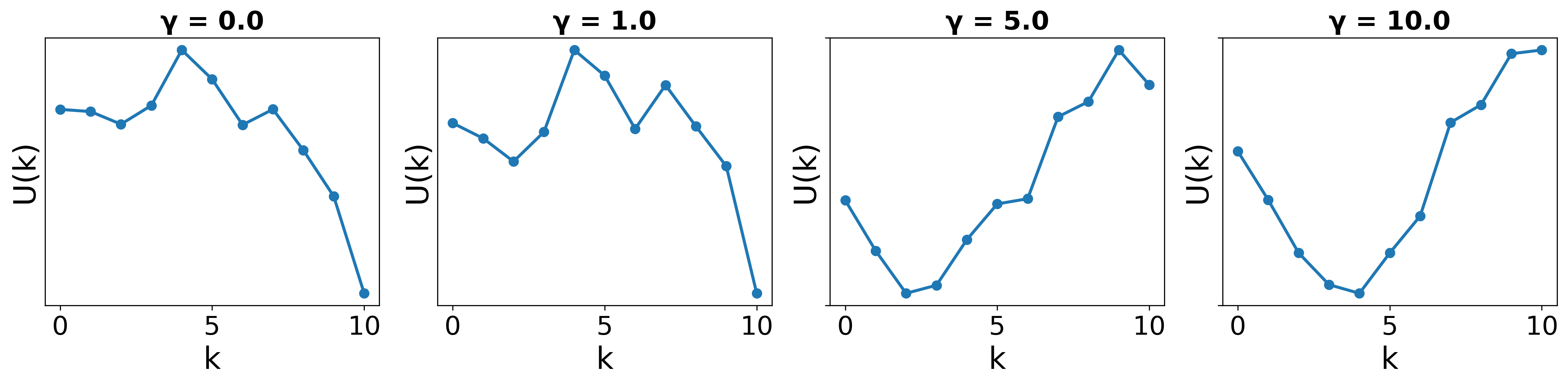}
    \caption{Mean welfare vs $k$ for increasing values of $\gamma$ for per-query cost $c = 0.3$. As $\gamma$ increases, the optimal number of queries $k$ changes, increasing from $4$ to $10$ as $\gamma$ increases from $0$ to $10$. The $y$-axis scale is omitted, as welfare is not comparable across $\gamma$; we consider only within-$\gamma$ comparisons over $k$.}
    \label{fig:sushi-welfare}
\end{figure}

Figure \ref{fig:sushi-tradeoff} illustrates the underlying fairness–utility tradeoff. Setting $k = 4$ maximizes mean utility but generates substantial inequality by favoring majority preferences. At the other extreme, fully eliciting preferences minimizes inequality but reduces utility due to query costs. Notably, because the marginal benefit of querying exceeds the cost in this setting, utility at $k = 10$ remains strictly positive. Consequently, as $\gamma \to \infty$, higher levels of querying are always preferred.

\begin{figure}
    \centering
    \includegraphics[width=0.5\linewidth]{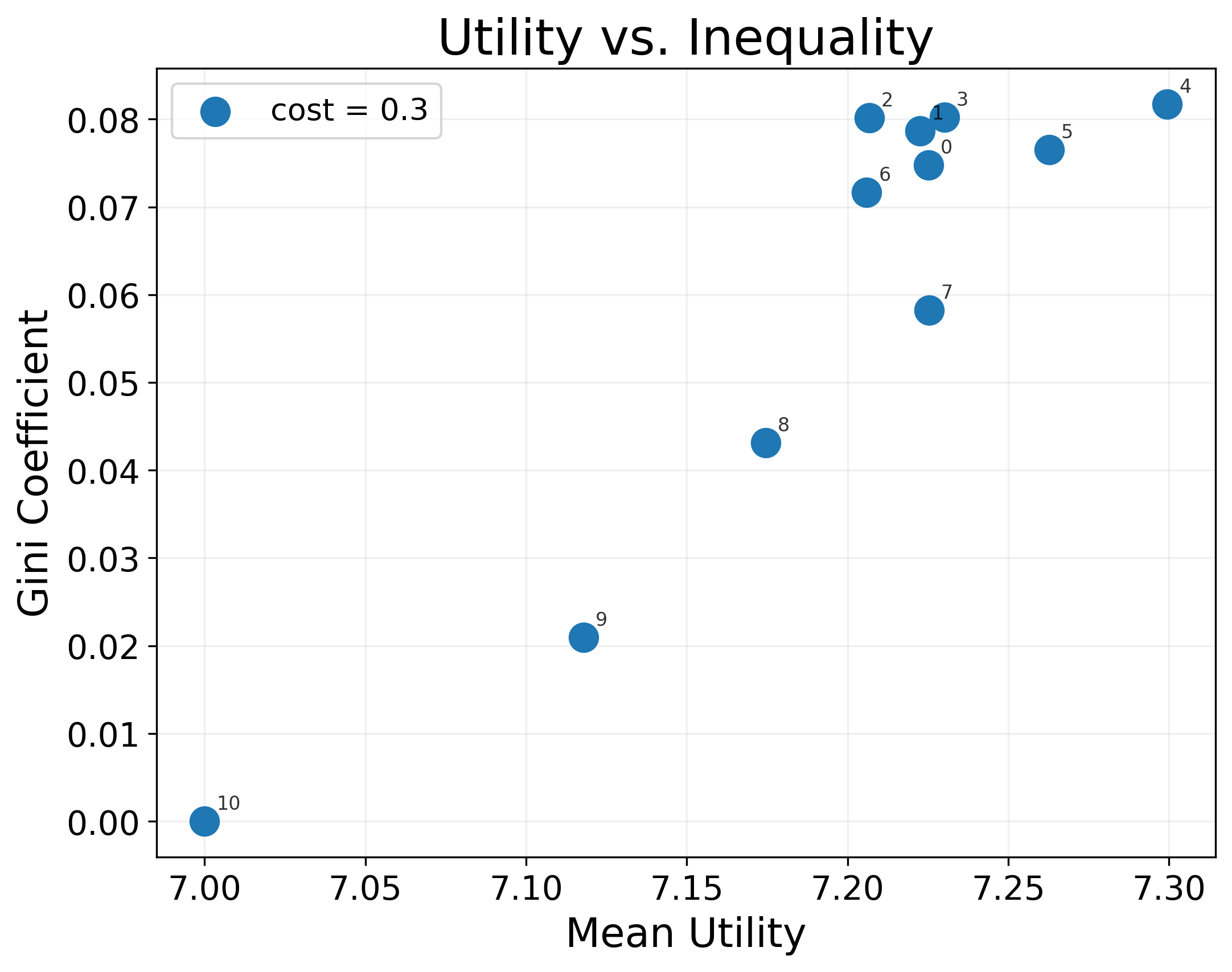}
    \caption{Mean utility vs Gini coefficient for different values of $k$ and $c = 0.3$. Smaller $k$ are clustered in the upper-right, achieving higher mean utility at the cost of greater inequality. As k increases toward $10$, mean utility declines but so does inequality, with k = 10 attaining near-perfect equality.}
    \label{fig:sushi-tradeoff}
\end{figure}

\section{Discussion}
\label{sec-discussion}
Our work introduces a mathematical framework for interaction between humans and LLMs, formalized under a particular model of preference generation. Within this framework, we assume that the LLM is able to obtain information about user preferences by directly querying users for their preferences -- a behavior that is starting to be utilized by LLM chatbots in practice. We also consider an objective function that maximizes social welfare, with a parameter $\gamma$ that controls the trade-off between fairness and utility. Despite its simplicity, our model reveals subtle relationships between fairness, utility, and output diversity---relationships that may practically emerge even when model assumptions are violated.

One key limitation of our model is its reliance on a highly stylized
distribution of user preferences. The Bernoulli mixture model fits some data
well but may not generalize to other domains. While our experiments in
\Cref{sec-empirics} validate some of our findings, further
empirical testing is needed to assess
when querying strategies are valuable for a specific domain. Moreover, the model
assumes that an LLM has perfect knowledge of the full distribution of user
preferences — however, this assumption may be imperfect, given that LLMs may rely on incomplete or biased data (our experiments in Section \ref{sec-empirics} explore relaxations of this assumption). Finally, the model treats all users as facing the same cost when
answering queries, though in reality this burden varies with each individual’s
circumstances and the specificity of the question.

Future work could generalize our model in a number of ways.
One could consider more general preference models, differential costs for information, multi-round conversations, and stateful personalization.
More generally, information solicitation by LLMs is fruitful direction for future research, and we seek to contribute to the literature with theoretical grounding that might help guide and motivate future empirical work.

\printbibliography

\appendix

\section{Derivations}
\label{app-derivations}
\subsection{Utility}

To obtain a formalization of expected utility from the high-level form given in Equation \ref{eq:util}, we start by breaking down the expected value of utility into 2 cases -- those where the original user is from cluster $0$ and those where the original user is from cluster $1$. 


Recall that $f(a)$ is the exact number of $0$s guessed by the LLM's policy after observing $a$ revealed $1$s. All remaining guesses are $1$s. In the case where the original user is from cluster $0$, any mistakes within the guessed $0$s would be bits that flipped from the original cluster $0$ to a $1$, a proof of which can be seen in Appendix \ref{app-proofs}. Similarly, any mistakes within the guessed $1$s would be bits that did not flip from the original cluster $0$. Let $j$ be a counter for the number of true $0$s that were guessed to be $1$s and $\ell$ be a counter for the number of true $1$s that were guessed to be $0$s. Using our noise rate, we can write this as

\begin{equation*}
    \Pr(\text{$j$ mistakes in $0$s $\mid$ cluster $0$}) = \binom{f(a)}{j}(1-p)^{f(a) - j} \cdot p^j.
\end{equation*}
Similarly, we can express
\begin{equation*}
    \Pr(\text{$\ell$ mistakes in $1$s $\mid$ cluster $0$}) = \binom{n - k - f(a)}{\ell} (1-p)^\ell \cdot p^{n - k - f(a) - \ell}.
\end{equation*}
Summing over all possible choices of $j$ and $\ell$, we can generalize this to write the probability of any mistake given $a$ $1$s were revealed as
\begin{equation*}
    \Pr(\text{mistake $\mid$ cluster $0$}) = \sum_{\substack{j \leq f_\gamma(a) \\ \ell \leq n - k - f_\gamma(a)}} \binom{f_\gamma(a)}{j}(1-p)^{f_\gamma(a) - j}p^j \cdot \binom{n - k - f_\gamma(a)}{\ell} (1-p)^\ell p^{n - k - f_\gamma(a) - \ell}.
\end{equation*}
From the previous equation, note that the total number of mistakes $j + \ell$ is actually the sum of two Bernoulli RVs with different parameters. Thus, we use linearity of expectation to simplify the expected number of mistakes $\mathbb{E}[B]$ from Equation \ref{eq:util} conditioned on a user belonging to cluster $0$ and taken with respect to all possible user preference vectors as
\begin{equation*}
    \mathbb{E}[B \mid \text{cluster 0, $a$ revealed $1$s}] = f(a) \cdot p + (n - k - f(a))(1-p).
\end{equation*}
We now must take expectation with respect to all possible counts $a$ of revealed $1$s within the $k$ revealed features. Marginalizing over the value of $a$, we find
\begin{equation*}
    \mathbb{E}[B \mid \text{cluster $0$}] = \sum_{a = 0}^k \binom{k}{a} (1-p)^{k-a}p^a \cdot (f(a) \cdot p + (n - k - f(a))(1-p)).
\end{equation*}
Exploiting the symmetry of the preference generation process for the case where a user originates from cluster $1$, we can recognize that the expressions are identical except for the values of $p$ and $(1-p)$ being swapped. Hence, we can write
\begin{equation*}
    \mathbb{E}[B \mid \text{cluster $1$}] = \sum_{a = 0}^k \binom{k}{a} (1-p)^ap^{k-a} \cdot (f(a) \cdot (1-p) + (n - k - f(a)) \cdot p).
\end{equation*}
Marginalizing once more over user cluster of origin, we are able to write
\begin{equation*}
    \mathbb{E} = \alpha \cdot \mathbb{E}[B \mid \text{cluster $0$}] + (1 - \alpha) \cdot \mathbb{E}[B \mid \text{cluster $1$}].
\end{equation*}
Finally, plugging everything back in, we end up with the formalized expression of expected utility as
\begin{align*} \label{eq:full-util}
    \eu{k} = n - k &- \alpha \cdot \sum_{a = 0}^k \binom{k}{a} (1-p)^{k-a}p^a \cdot (f(a) \cdot p + (n - k - f(a))(1-p)) \\
    &- (1 - \alpha) \cdot \sum_{a = 0}^k \binom{k}{a} (1-p)^ap^{k-a} \cdot (f(a) \cdot (1-p) + (n - k - f(a)) \cdot p).
\end{align*}

\subsection{Welfare}

To obtain an expression for welfare based on the general form given in Equation \ref{eq:welfare}, we can use much of the same logic as in the previous section. The main practical differences between the expression for expected utility and the expression for welfare is the inclusion of the $\gamma$ parameter which the utility is raised to. Due to the presence of this parameter, we can no longer apply linearity of expectation to pull out the $n - k$ in the expression for $W(k)$. Hence, we must directly plug in the full expression for utility into our summation.

As in the previous derivation, begin by considering the case where a user comes from cluster $0$. As before, if we assume $a$ $1$s have been observed, we have
\begin{equation*}
    \Pr(\text{mistake $\mid$ cluster $0$}) = \sum_{\substack{j \leq f_\gamma(a) \\ \ell \leq n - k - f_\gamma(a)}} \binom{f_\gamma(a)}{j}(1-p)^{f_\gamma(a) - j}p^j \cdot \binom{n - k - f_\gamma(a)}{\ell} (1-p)^\ell p^{n - k - f_\gamma(a) - \ell}.
\end{equation*}
Recall that the total number of mistakes in this equation is equal to $j + \ell$, so the utility for any given $j$ and $\ell$ is $n - k - j - \ell$. For simplicity of notation, define
\begin{align*}
        G_{0, \gamma}(z, p)
        &\triangleq
        \sum_{\substack{j \leq z \\ \ell \leq n - k - z}} \binom{z}{j}(1-p)^{z - j}p^j \cdot \binom{n - k - z}{\ell} (1-p)^\ell p^{n - k - z - \ell} \cdot (n - k - j - \ell)^{1 - \gamma}.
\end{align*}
From here, let $W_0(k, \gamma)$ denote the welfare given that a user is from cluster $0$. With these definitions and conditioned on seeing $a$ $1$s, we arrive at the following welfare taking the form of our family of welfare functions
\begin{equation*}
    W_0(k, \gamma \mid \text{$a$ revealed $1$s}) = \frac{1}{1 - \gamma} \cdot G_{0, \gamma}(f_\gamma(a), p).
\end{equation*}
Marginalizing over all possible sets of $k$ revealed features, we are able to write conditional welfare taken over the randomness of $k$ as
\begin{equation*}
    W_0(k, \gamma) = \frac{1}{1 - \gamma} \cdot \sum_{a = 0}^k \binom{k}{a}(1-p)^{k-a}p^a \cdot G_{0, \gamma}(f_\gamma(a), p).
\end{equation*}
Similarly to before, we can exploit the symmetry of the preference generation process for the case where a user originates from cluster $1$, and express the conditional welfare for cluster $1$ as
\begin{equation*}
    W_1(k, \gamma) = \frac{1}{1 - \gamma} \cdot\sum_{a = 0}^k \binom{k}{a}(1-p)^a p^{k-a} \cdot G_{1, \gamma}(f_\gamma(a), 1-p).
\end{equation*}
This expression is nearly identical to $W_0(k, \gamma)$, except for the $p$ and $1 - p$ values are flipped, in line with the different starting clusters. 

Finally, marginalizing once more over cluster of origin, we are able to get a full expression for welfare as
\begin{equation}\label{eq:full-welfare}
    W(k, \gamma) = \frac{1}{1 - \gamma} \parens*{\alpha \cdot W_0(k, \gamma) + (1 - \alpha) \cdot W_1(k, \gamma)}.
\end{equation}

\section{Proofs}
\label{app-proofs}
\subsection{Proof of Lemma \ref{thm:util-policy}}
\utilpolicy*
\begin{proof}
    In order to prove the optimal response policy in the $\gamma = 0$ case, consider the basic definition of expected utility given in Equation \ref{eq:welfare} as
    \begin{equation*}
        \eu{k} = n - k - \mathbb{E}[b],
    \end{equation*}
    where $b$ is the number of bits that were guessed incorrectly. The expected number of mistakes can be written as the sum of the probabilities of guessing each bit incorrectly. Moreover, as all bits are flipped during our preference generation process independently with the same probability $p$, we can express the expected number of mistakes as simply
    \begin{equation*}
        \mathbb{E}[b] = (n - k) \cdot \Pr(\text{single bit is guessed incorrectly}).
    \end{equation*}
    Thus, maximizing expected utility is equivalent to minimizing the probability of an incorrect guess for each bit. 
    
    Because all bits are identically distributed and independent, the same optimal decision rule applies to each bit. The optimal guess for a bit is therefore the value (either $0$ or $1$) that maximizes its posterior probability given the revealed information. Using Bayes’ rule, this corresponds to choosing the more likely bit value conditional on the observed revealed bits. Hence, we can claim that the optimal response policy in the $\gamma = 0$ case is to always guess the same value based on revealed bits.
\end{proof}

\subsection{Proof of Lemma \ref{lem:queries}}
\queries*
\begin{proof}
    Consider the plot given in Figure \ref{fig:equal-clusters}. We can see that for $n = 3$, it is best to ask a single query when $p = 0$. However, in that same $n = 3$ plot when $p = 0.5$, it is best to ask no queries in order to maximize utility.
\end{proof}

\subsection{Proof of Lemma \ref{lem:extreme-alpha}}
\extremealpha*
\begin{proof}
    Consider the plot given in Figure \ref{fig:extreme-alpha}. We see that both for $n = 3$ and $n = 6$, it always maximizes expected utility to ask no questions and just guess on all bits.
\end{proof}

\subsection{Proof of Lemma \ref{lem:det}}
\det*
\begin{proof}
    Assume the LLM returns a randomized response. First, utilizing the definition of welfare given in Equation \ref{eq:full-welfare} and the derivation of this definition in the previous section, recall
    \begin{align*}
        G_{0, \gamma}(z, p) &=
        \sum_{\substack{j \leq z \\ \ell \leq n - k - z}} \binom{z}{j}(1-p)^{z - j}p^j \cdot \binom{n - k - z}{\ell} (1-p)^\ell p^{n - k - z - \ell} \\
        &\cdot (n - k - j - \ell)^{1 - \gamma}.
    \end{align*}
    This is the expected welfare for a cluster $0$ member if we choose $z$ of the remaining $n-k$ to be $0$s.
    From this definition, the deterministic welfare was given by
    \begin{align*}
        W_0(k, \gamma)
        &= \sum_{a = 0}^k \binom{k}{a}(1-p)^{k-a}p^a
        \sum_{z=0}^{n-k}
        1[f_\gamma(a) = z]
        G_{0,\gamma}(z, p).
    \end{align*}
    To adapt welfare to randomized policies, define $q_\gamma(a, z) = \Pr[f_\gamma(a) = z]$. In a deterministic policy, we would constrain $q$'s to be either 0 or 1. For randomized policies, welfare for users from cluster $0$ becomes
    \begin{align*}
        W_0(k, \gamma)
        &= \sum_{a = 0}^k \binom{k}{a}(1-p)^{k-a}p^a
        \sum_{z=0}^{n-k}
        q_z(a, \gamma)
        G_{0,\gamma}(z).
    \end{align*}
    Now, let $z^\star$ denote the number of $0$s that is optimal to guess for the remaining bits given $z \in [0, n - k]$. Formally, choose $z^\star = \argmax_{0 \leq z \leq n - k} G_{0, \gamma}(z, p)$.
    Then for users from cluster $0$, the policy that deterministically returns $z^\star$ $0$s has welfare at least as large as the original randomized distribution over $q$s and strictly larger in every situation except for the edge case where whenever the distribution of $q$s has support only over choices of $z$ with $G_{0, \gamma}(z, p) = G_{0, \gamma}(z^\star, p)$. The result for cluster $1$ follows by a symmetrical argument with $G_{1,\gamma}(z, p)$.

    
\end{proof}

\subsection{Proof of Lemma \ref{lem:opt-policy}}
\optimalpolicy*
\begin{proof}
    Consider the plot of optimal response policy as a function of $\gamma$ given in Figure \ref{fig:optimal-policy}. For $\gamma = 10$ in the row second from the left, we can see in the case where $n = 6$ and $k = 3$, it is best to output some number at least a single $0$ \emph{and} a single $1$ when both a $1$ and $0$ are observed.
\end{proof}

\subsection{Proof of Theorem \ref{thm:maximin-policy}}
\maximinpolicy*
\begin{proof}
    To prove that we always utilize the information learned from the revealed features as $\gamma \rightarrow \infty$ and $k \leq n/2$, we solve for the precise optimal response policy and show that it depends on $a$, the number of $1$s revealed. Without loss of generality, shift utility up by $1$ to $(n - k - j - \ell + 1)$.
    Now, consider what happens to the expression $(n - k - j - \ell + 1)^{1 - \gamma}$ as $\gamma$ increases. As $n - k - j - \ell > 1$, the utility will go to $0$. However, when $n - k = j + \ell$, then utility will be exactly 1. Thus, in the limit as $\gamma \rightarrow \infty$, only the terms in the double sum where $j + \ell = n - k$ survive. Because by definition we know $j + \ell$ can never exceed $n - k$, we do not need to worry about negative terms in our expression for welfare. 
    
    The only term where $n - k = j + \ell$ is exactly the case where $j = f_\gamma(a)$ and $\ell = n - k - f_\gamma(a)$. Simplifying our double sum to this single term, we can express welfare as
    \begin{align*}
        W_0(k, \infty) &= \frac{1}{1 - \gamma}\sum_{a = 0}^k \binom{k}{a}(1-p)^{k-a}p^a \\
        &\cdot \left( \binom{f_\gamma(a)}{j}(1-p)^{f_\gamma(a) - j}p^j \cdot \binom{n - k - f_\gamma(a)}{\ell} (1-p)^\ell p^{n - k - f_\gamma(a) - \ell}\right), \\
        W_1(k, \infty) &= \frac{1}{1 - \gamma}\sum_{a = 0}^k \binom{k}{a}(1-p)^ap^{k-a} \\
        &\cdot \left( \binom{f_\gamma(a)}{j}(1-p)^j p^{f_\gamma(a) - j} \cdot \binom{n - k - f_\gamma(a)}{\ell} (1-p)^{n - k - f_\gamma(a) - \ell}p^\ell \right).
    \end{align*}
    This can be further simplified to
    \begin{align*}
        W_0(k, \infty) &= \frac{1}{1 - \gamma}\sum_{a = 0}^k \binom{k}{a}(1-p)^{k-a}p^a \left( p^{f_\gamma(a)} \cdot (1-p)^{n - k - f_\gamma(a)} \right), \\
        W_1(k, \infty) &= \frac{1}{1 - \gamma}\sum_{a = 0}^k \binom{k}{a}(1-p)^ap^{k-a} \left( (1-p)^{f_\gamma(a)} \cdot p^{n - k - f_\gamma(a)} \right).
    \end{align*}
    Now, note that as $\gamma \rightarrow \infty$, then $\frac{1}{1 - \gamma}$ approaches $0$ from the negative side. So if we just want to maximize welfare, we can replace $\frac{1}{1 - \gamma}$ with $-1$. With this and combining $W_0$ and $W_1$, we can write the total expression we wish to maximize as
    \begin{equation*}
        W(k, \infty) = - \sum_{a = 0}^k \binom{k}{a} \left( \alpha \cdot p^{f_\gamma(a)} \cdot (1-p)^{n - k - f_\gamma(a)} + (1 - \alpha) \cdot (1-p)^{f_\gamma(a)} \cdot p^{n - k - f_\gamma(a)} \right).
    \end{equation*}
    This expression will be maximized by choosing a policy to minimize each term within the summation. As can be seen by the dependence of $a$ in the exponent of the terms, $a$ influences the choice of policy. To minimize each term, let $r := \frac{p}{1-p}$ and let us express
    \begin{align*}
        g(f_\gamma(a)) &= \alpha \cdot p^{f_\gamma(a)} \cdot (1-p)^{n - k - f_\gamma(a)} + (1 - \alpha) \cdot (1-p)^{f_\gamma(a)} \cdot p^{n - k - f_\gamma(a)} \\
        &= \alpha \cdot (1 - p)^n \cdot r^{f_\gamma(a) + a} + (1 - \alpha) \cdot p^n \cdot r^{-f_\gamma(a) - a} \\
        &= (1-p)^n \left( \alpha \cdot r^{f_\gamma(a) + a} + (1-\alpha) \cdot r^{n - f_\gamma(a) - a} \right).
    \end{align*}
    To minimize this expression, we can take the derivative and set it to $0$. Ignoring any part that does not depend on $f_\gamma(a)$ gives us a derivative of
    \begin{align*}
        g'(f_\gamma(a)) &= \alpha \cdot \log(r) \cdot r^{f_\gamma(a) + a} + (1 - \alpha) \cdot (-1) \cdot \log(r) \cdot r^{n - f_\gamma(a) - a} \\
        &= \log(r) \cdot (\alpha \cdot r^{f_\gamma(a) + a} - (1 - \alpha) \cdot r^{n - f_\gamma(a) - a}).
    \end{align*}
    Setting this equal to $0$, we get
    \begin{equation*}
        \log(r) \cdot (\alpha \cdot r^{f_\gamma(a) + a} - (1 - \alpha) \cdot r^{n - f_\gamma(a) - a}) = 0.
    \end{equation*}
    If $p = 1/2$, $r = 1$ and $\log(r) = 0$, then all points are stationary points. However, when $p \neq 1 - p$, then a single stationary point exists when
    \begin{equation*}
        f_\gamma(a) = \frac{1}{2}\parens*{n - 2a + \log_r\parens*{\frac{1 - \alpha}{\alpha}}}.
    \end{equation*}
    When $\alpha = 1/2$, this implies $f_\gamma(a)$ should be $n/2 - a$. This is always a non-negative value, as we assume $k \leq n/2$. When $\alpha \neq 1/2$, this implies that as $\alpha$ increases, $f_\gamma(a)$ decreases. As the second derivative of $g$ is positive for valid $p$ and $\alpha$ values, we know the inflection points described must be a minimum. If it is not possible to set $f_\gamma(a)$ as described above within the constraint that $f_\gamma(a) \in [0, n - k]$, then $f_\gamma(a)$ should be set as close to as possible to the calculated expression to minimize $g$. 

    To conclude, we have provided an exact formulation for the optimal response policy when $\gamma \rightarrow \infty$. As it depends on $a$ in all cases, we can claim that the optimal response policy for maximin fairness will never ignore observations.
\end{proof}

\subsection{Proof of Lemma \ref{lem:k-decreasing}}
\kdecreasing*
\begin{proof}
    Consider the plot given in Figure \ref{fig:optimal-queries}. We can see that when $n = 3$ and $p = 0.1$, the optimal number of queries to ask decreases from $1$ to $0$ as $\gamma$ increases.
\end{proof}

\subsection{Proof of Lemma \ref{lem:entropy-decrease}}
\entropydecrease*
\begin{proof}
     Consider the plot given in Figure \ref{fig:optimal-queries}. We can see that when $n = 25$ and $p = 0.1$, the optimal number of queries to ask increases from $4$ to $6$ as $\gamma$ increases from $2$ to $5$. However, in this same plot when $k$ increases from $5$ to $10$, the optimal number of queries to ask decreases back to down to $2$. Computing the entropy based on the optimal response policy first when $k = 4$ and $\gamma = 2$, we find that $H = 1.23$. When $k = 6$ and $\gamma = 5$, the optimal response policy gives an entropy of $H = 2.42$. However, when $k = 2$ and $\gamma = 10$, the limited number of possible responses the model can give results in an entropy of $H = 1.50$. From these three values, we can see that the entropy is a non-monotonic function of $\gamma$.
\end{proof}

\section{Additional Plots}
\label{app-plots}
Using the full expression for welfare with variable $\gamma$ given in Equation \ref{eq:full-welfare}, the following plots were generated to demonstrate how welfare is affected as various parameters are changed.

\begin{figure}[h!]
    \centering
    \includegraphics[width=\linewidth]{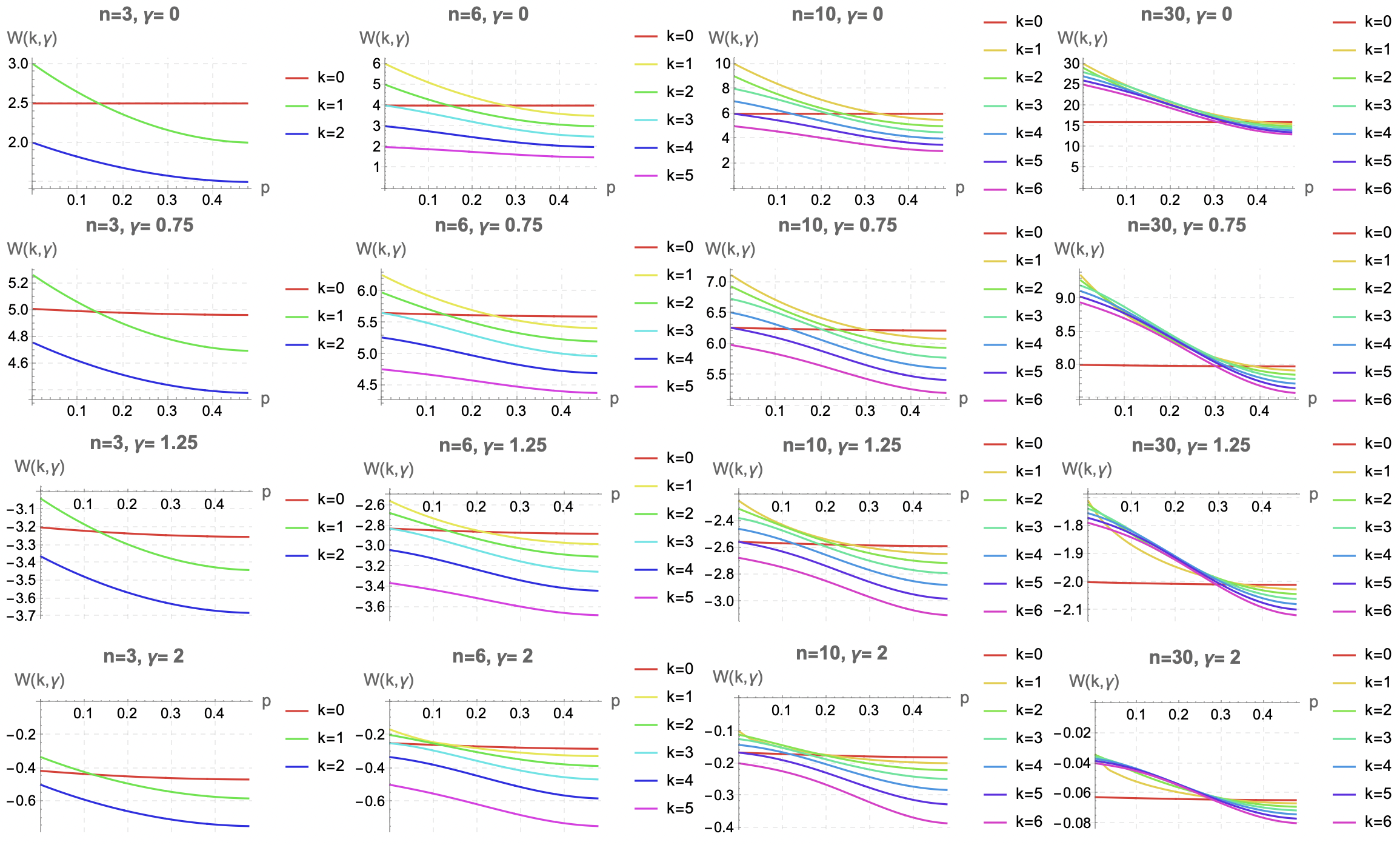}
    \caption{Welfare $W(k, \gamma)$ vs flipping probability $p$ in the case of $\alpha = 1/2$ for various values of $\gamma$. The choice of $k$ leading to the optimal query length is included in each plot.}
    \label{fig:equal-clusters-full}
\end{figure}

\begin{figure}[h!]
    \centering
    \includegraphics[width=\linewidth]{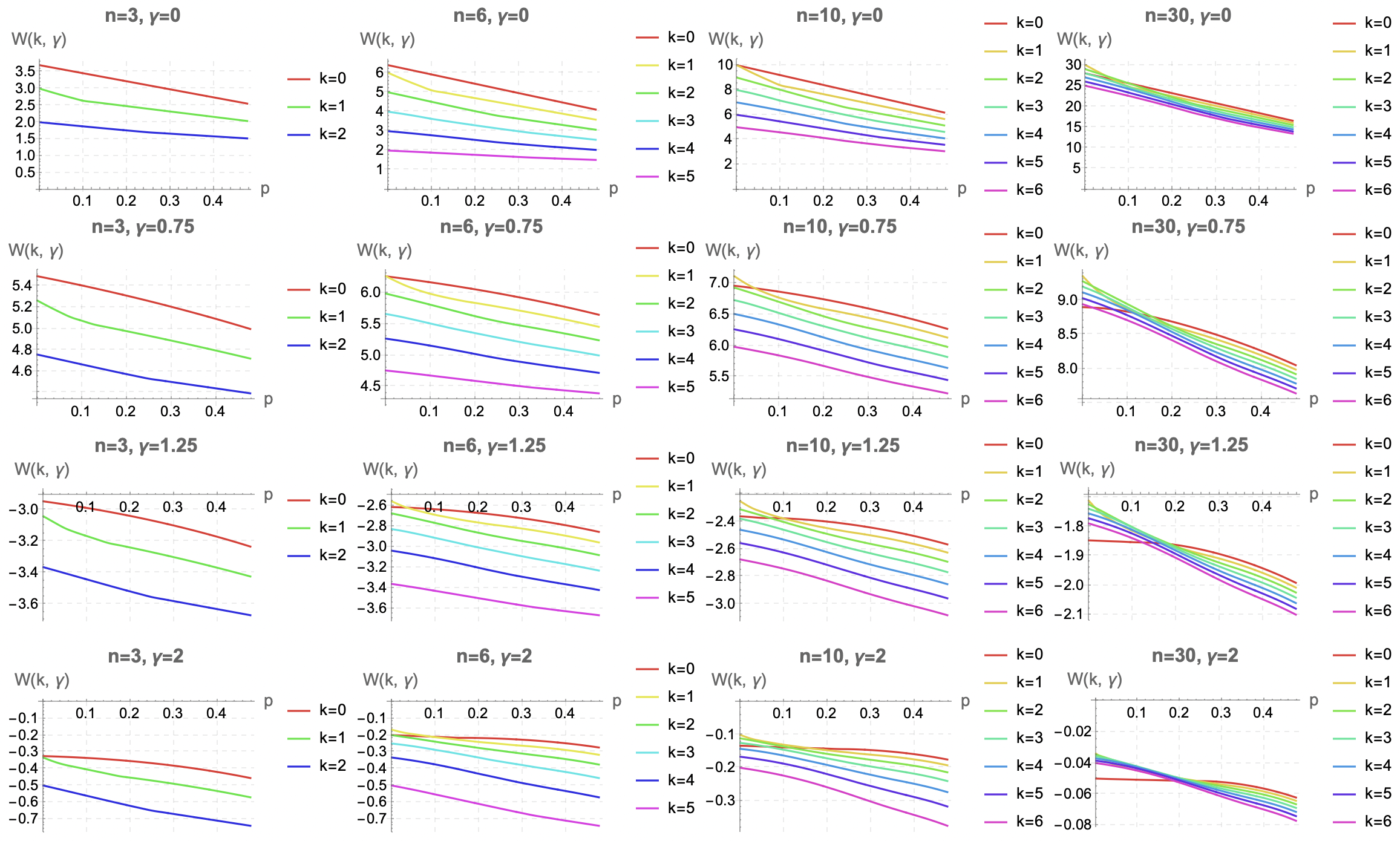}
    \caption{Welfare $W(k, \gamma)$ vs $p$ in the case of $\alpha = 0.9$ for various values of $\gamma$. The choice of $k$ leading to the optimal query length is included in each plot.}
    \label{fig:extreme-alpha-full}
\end{figure}

\begin{figure}[h!]
    \centering
    \includegraphics[width=\linewidth]{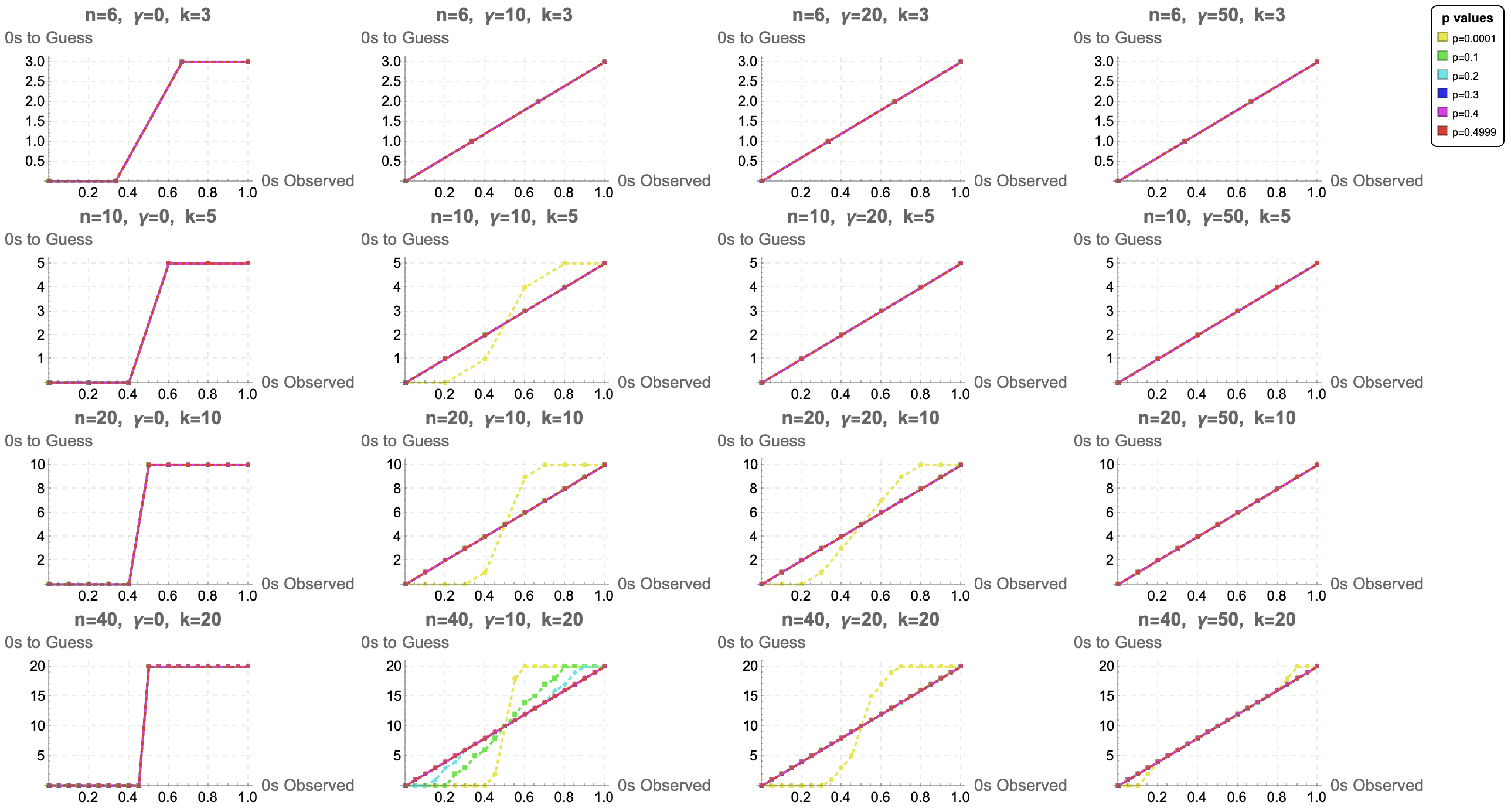}
    \caption{The number of $0$s to output based on the optimal response policy $n - k - f_\gamma(a)$ as a function of $a$ for various $p$ values when $\alpha = 1/2$ and $k = 3$.}
    \label{fig:optimal-policy} 
\end{figure}

\begin{figure}[h!]
    \centering
    \includegraphics[width=\linewidth]{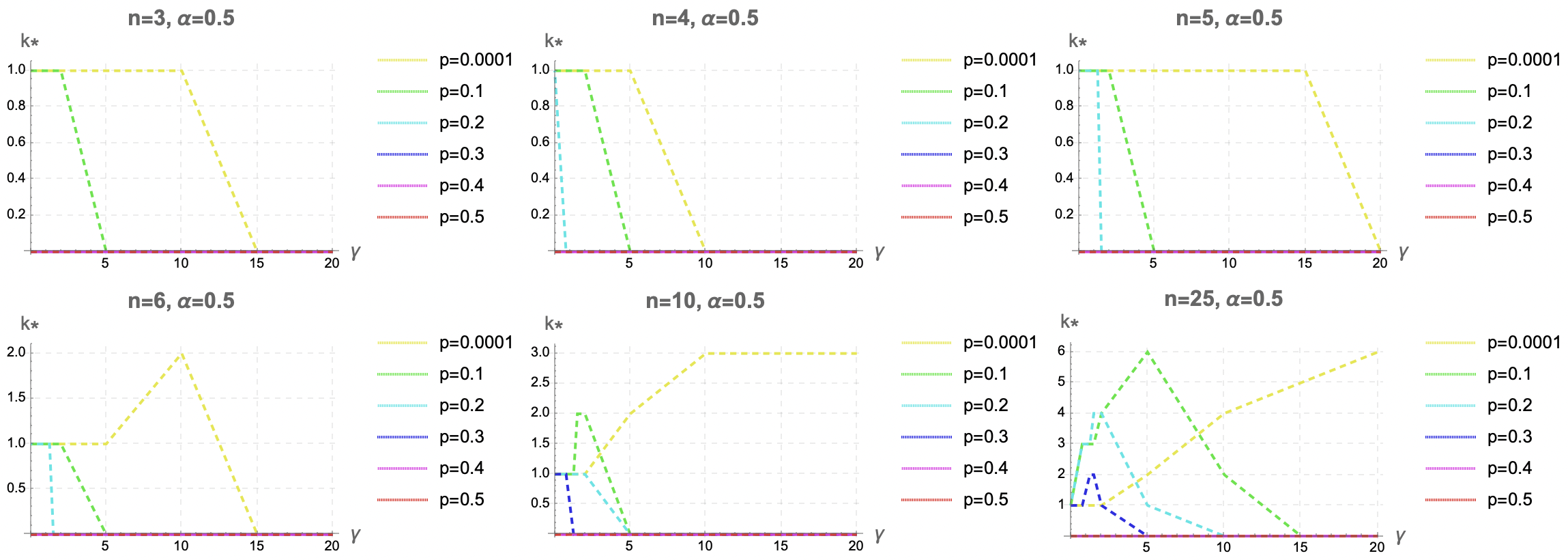}
    \caption{Optimal choice of $k$ vs $\gamma$ for various values of $p$ and $\alpha = 1/2$.}
    \label{fig:optimal-queries}
\end{figure}

\begin{figure}[h!]
    \centering
    \begin{minipage}{0.32\linewidth}
        \centering
        \includegraphics[width=\linewidth]{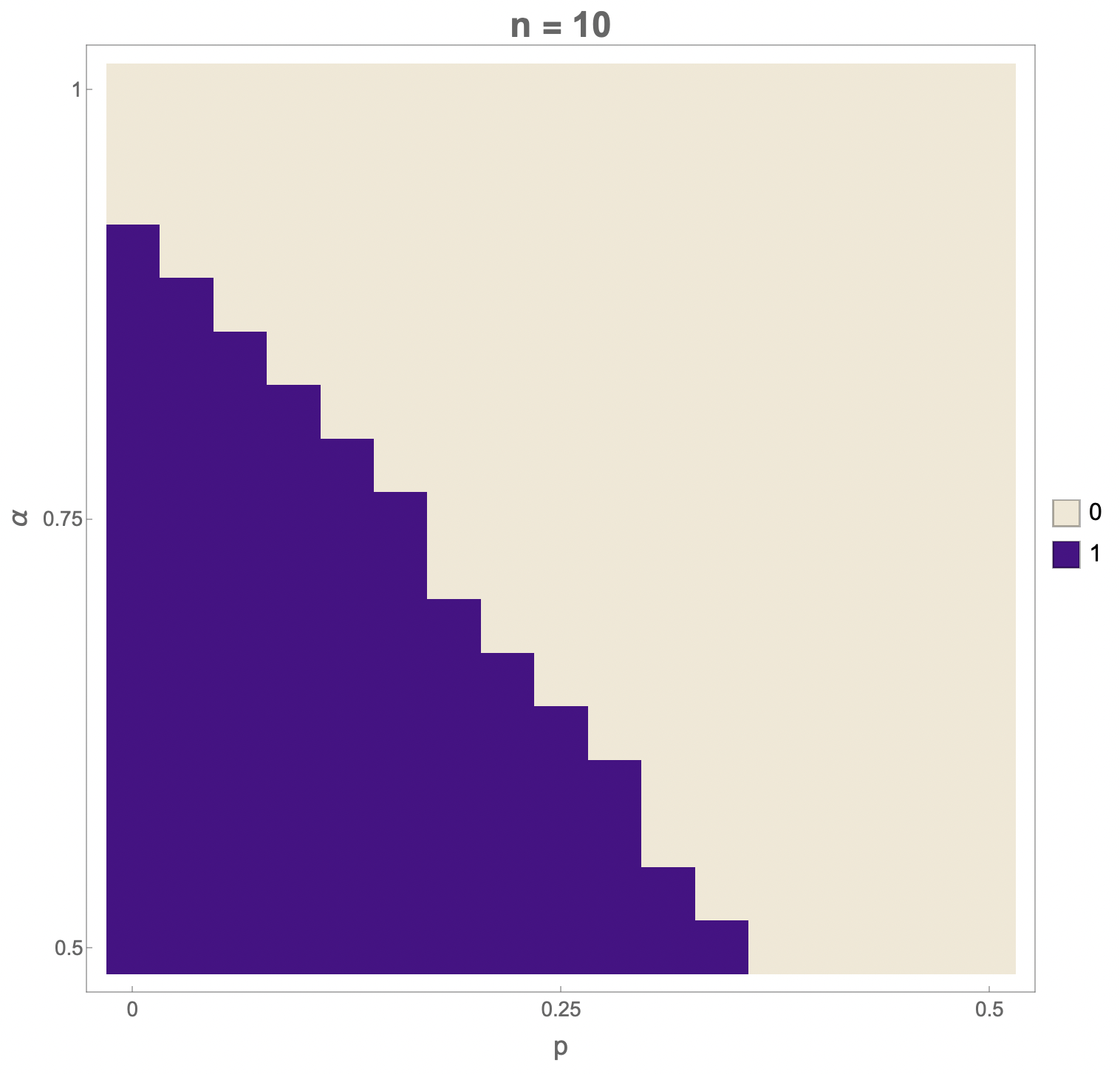}
        \small (a) $n = 10$
    \end{minipage}
    \hfill
    \begin{minipage}{0.32\linewidth}
        \centering
        \includegraphics[width=\linewidth]{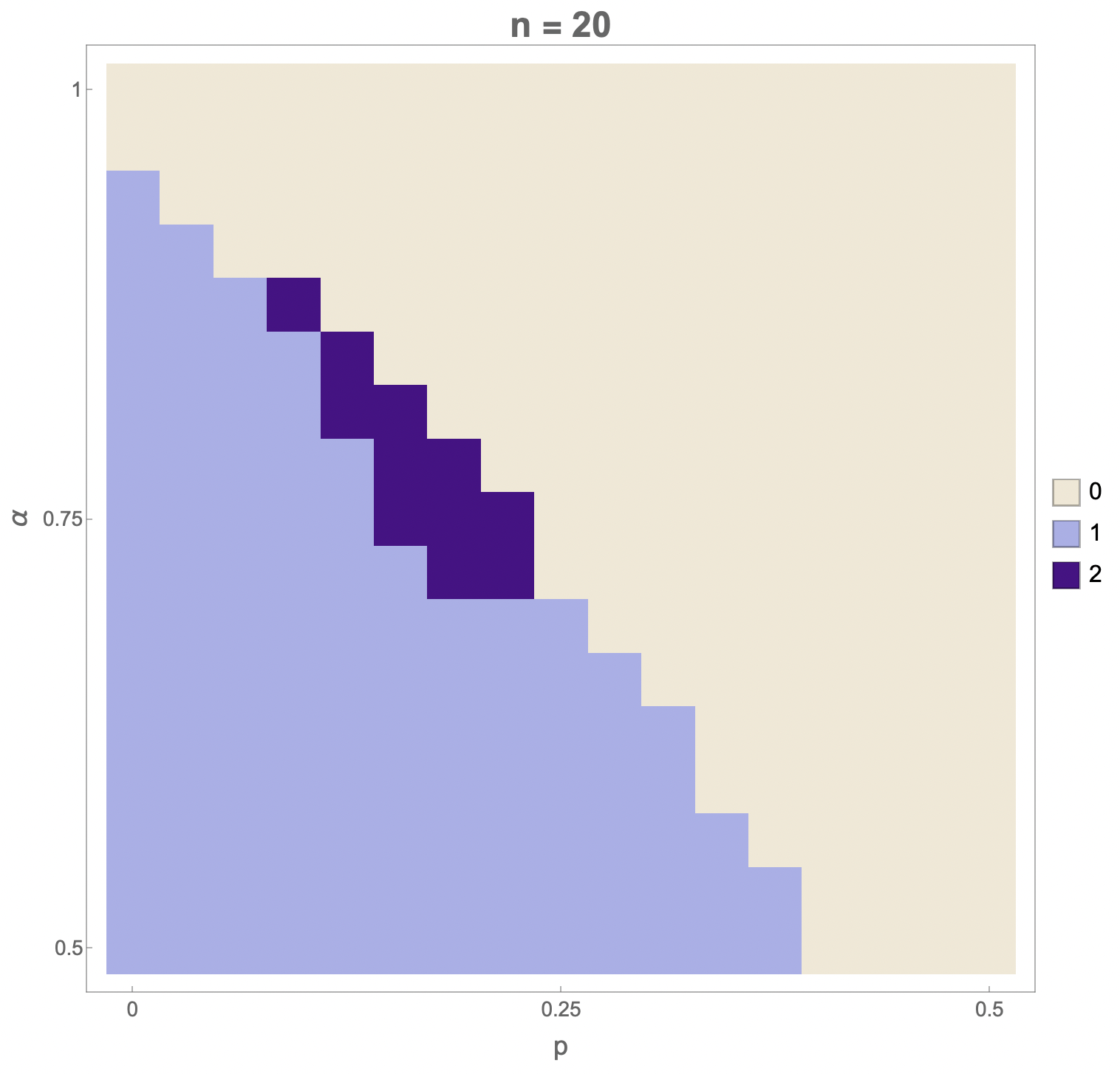}
        \small (b) $n = 20$
    \end{minipage}
    \hfill
    \begin{minipage}{0.32\linewidth}
        \centering
        \includegraphics[width=\linewidth]{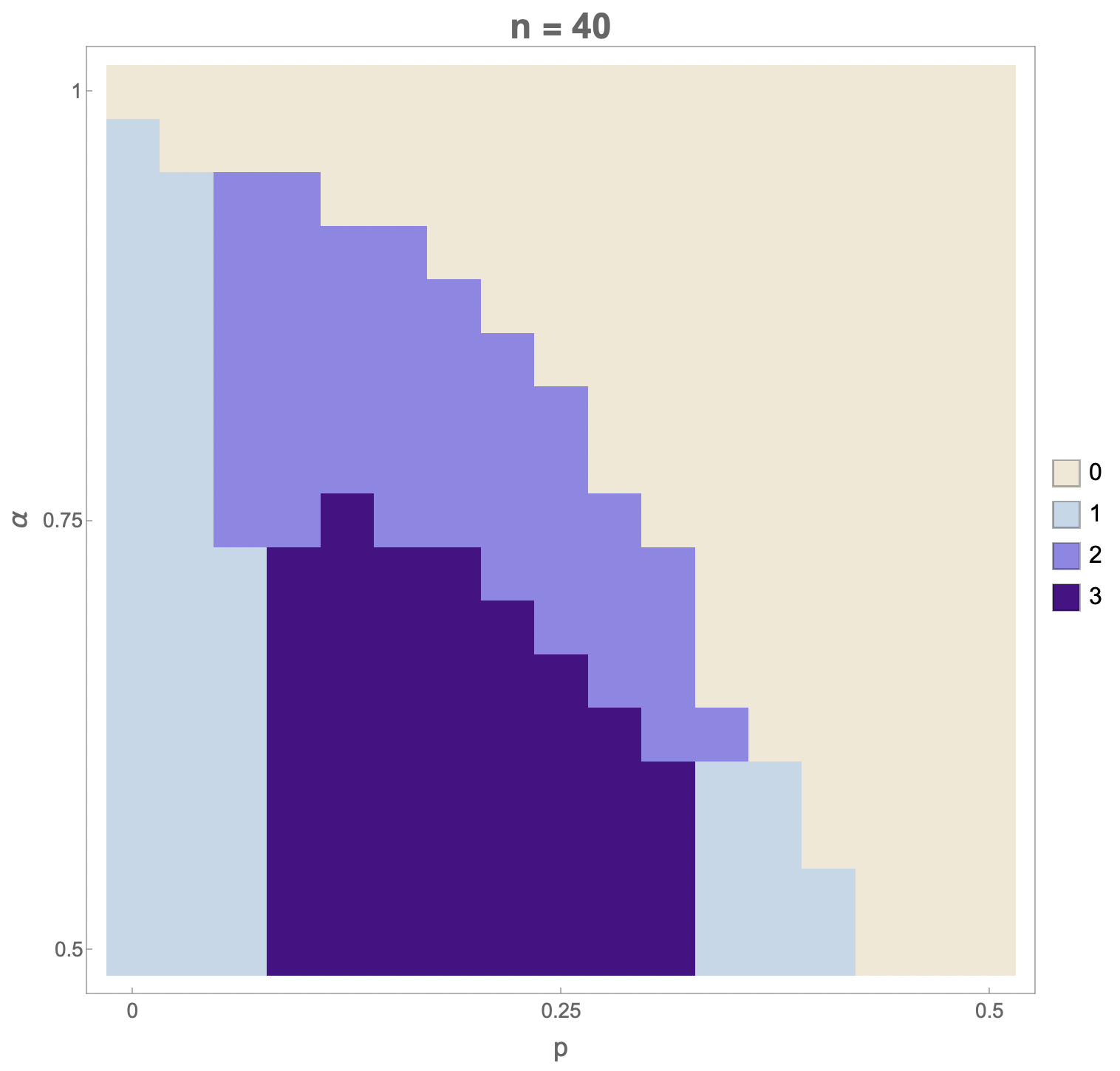}
        \small (c) $n = 40$
    \end{minipage}

    \caption{Optimal number of queries $k$ for various values of $p$ and $\alpha$ across different values of $n$.}
    \label{fig:regime-full}
\end{figure}

\end{document}